\newtheorem{lemma}{\textbf{Lemma}}
\newcommand{\algo}[1]{\textsc{#1}} % used textsc
\newcommand{\var}[1]{\textsl{#1}} % used textsc
\begin{document}

\title{1-D and 2-D Parallel Algorithms for All-Pairs Similarity
  Problem}
%\subtitle{Vector-wise and dimension-wise distribution for all-pairs similarity}

\author{Eray \"Ozkural, Cevdet Aykanat} 
 
%\authorrunning{ E. \"Ozkural \and C. Aykanat}

%\institute{E. \"Ozkural\at
%           Bilkent University Computer Engineering Department \\
%              Tel.: +90-544-2901218\\
%              \email{erayo@cs.bilkent.edu.tr}           %  \\
%             \emph{Present address: G\"ok Us Sibernetik Ar-Ge Ltd.} of F. Author  %  if needed
%              \and
%            C. Aykanat \at
%            Bilkent University Computer Engineering Department}

\date{10 May 2013}
% The correct dates will be entered by the editor

\maketitle 

\begin{abstract}
  All-pairs similarity problem asks to find all vector pairs in a set
  of vectors the similarities of which surpass a given similarity
  threshold, and it is a 
  computational kernel in data mining and information retrieval for
  several tasks.
  We investigate the parallelization of a recent fast sequential
  algorithm.
  We propose effective \mbox{1-D} and 2-D data distribution strategies that
  preserve the essential optimizations in the fast algorithm. 1-D
  parallel algorithms distribute either dimensions or
  vectors, whereas the 2-D parallel algorithm distributes data both ways. 
  Additional contributions to
  the 1-D vertical distribution include a local pruning strategy to 
  reduce the number of candidates, a recursive pruning algorithm, 
  and block processing to reduce imbalance.  
  The parallel algorithms were programmed in OCaml which affords much
  convenience.
  Our
  experiments indicate that the performance depends on the dataset,
  therefore a variety of parallelizations is useful.
%\keywords{parallel data mining \and all pairs similarity \and ocaml
%  \and cosine similarity \and  range query \and k nearest neighbors}
%\PACS{PACS code1 \and PACS code2 \and more}
% \subclass{MSC code1 \and MSC code2 \and more}
\end{abstract}

%TODO: make vectors bold
\section{Introduction}

Given a set $V$ of $m$-dimensional $n$ vectors and a similarity
threshold $t$, the all-pairs similarity problem asks to find all
vector pairs with a similarity of $t$ and more.  Given the high
dimensionality of many real-world problems, such as those arising in
data mining and information-retrieval, this task has proven itself to
be quite costly in practice, as we are forced to use the brute-force
algorithms that have a quadratic running time complexity. Recently,
Bayardo et.~al \cite{Bayardo2007Scaling} have developed time and
memory optimizations to the brute force algorithm of calculating the
similarity of each pair in $V \times V$ and filtering them according
to whether the similarity exceeds $t$.  We may assume the vectors are
in $R^m$ and the similarity function is inner product without much
loss of generality. 

Two 1-D data distributions are considered: by
dimensions (vertical) and by vectors (horizontal). We introduce useful
parallelizations for both cases.  We have observed that the optimized
serial algorithms are suitable for parallelization in this fashion,
thus we have designed our algorithms based upon the fastest such
algorithm.  It turns out that our horizontal algorithms especially
attain a good amount of speedup, while the elaborate vertical
algorithms can attain a more limited speedup, partially due to
limitations in our implementation. Additional contributions to
the 1-D vertical distribution include a local pruning strategy to 
reduce the number of candidates, a recursive pruning algorithm, 
and block processing to reduce imbalance.  
We have combined the two data
distribution strategies to obtain an elegant 2-D parallel algorithm. We also
take a look at the performance of a previously proposed family of
optimized sequential algorithms and determine which of those
optimizations may be beneficial for a distributed memory parallel
algorithm design. 
We have implemented the parallel algorithms in the functional language
OCaml. 
A performance study compares the performance of the
proposed algorithms on small and large real-world datasets.

\subsection{Overview}

The rest of the paper is organized as
follows. \prettyref{sec:background} briefly gives the background of
the problem while in \prettyref{sec:relatedwork} we review related
work. Optimizations to the sequential algorithm are covered in
\prettyref{sec:optseq}.  \prettyref{sec:1dpar} introduces the 1-D
vertical and horizontal parallelizations, and likewise
\prettyref{sec:2dpar} presents the 2-D
parallelization. \prettyref{sec:perf} contains the performance study
and \prettyref{sec:conclusion} explains our conclusions and future work.

\subsection{Contributions}

Upon the first author's suggestion that two-dimensional algorithms
would be appropriate for frequent itemset mining and text
categorization problems, the second author contributed the idea that a
two-dimensional algorithm could work for the Euclidian all-pairs 
similarity problem (which the first author suggested as an important
problem as input to graph transduction algorithms), and offered a 
parallelization based on a mesh network. 
The first author later refined that approach by designing algorithms
that parallelize the recent fast
all-pairs similarity algorithms developed at Google. 
The first author then optimized the algorithms for non-blocking
networks, including the pruning and vector blocking approaches. The first
author proved the theorems, made the implementation in OCaml,
performed the experiments, and wrote the paper. The second author was
the PhD supervisor of the first author at the time of writing, and guided the
research by making valuable suggestions, and has endorsed this
paper. The second author also carefully reviewed and guided all
theoretical research on this problem and contributed the performance 
analysis framework in \prettyref{sec:perf}, which the first author extended. 

\section{Background}
\label{sec:background}

\subsection{Problem definition}

Let $V =\{ v_1, v_2, v_3, ..., v_n\}$ be the set of sparse input vectors in
$R^m$, following a similar terminology to \cite{Bayardo2007Scaling}.
Let $t$ be the similarity threshold.  Let a sparse vector $x$
be made up of $m$ components $x[i]$, where some $x[i]=0$; such a
sparse vector can be represented by a list of pairs $\left[(i, x[i])
\right]$ in which only non-zero components are stored.  Let $|x|$ be
the number of non-zero components in the vector, that is the length of
its list representation.  Let $||x||$ be the vector's magnitude. Let
also $size(V)=\sum_{v \in V}|v|$ be the number of non-zero values in
$V$.  Each vector $v_i$ is made up of components per dimension $d$,
where the vector's $d$th component is denoted as $v_i[d]$. The
similarity function is defined as the summation of input values from
similarity among individual components:
$sim(x,y)=\sum_{i}sim(x[i],y[i])$. Another accumulation function
instead of summation may be used (for instance any other binary
operation which has the same algebraic properties), however summation
is enough for many purposes.  The problem is to find the set of all
matches $M=\{ (v_i,v_j) \ |\ {v_i \in V} \land {v_i \in V} \land {i \neq j}
\land {sim(v_i,v_j) \geq t} \}$.

Without much loss of generality, we assume that input vectors are
normalized (for all $x \in V, ||x||=1$), and for vectors $x$ and $y$,
$sim(x,y)$ function is the dot-product function $dot(x,y) =
\sum_{i}x[i] . y[i]$, that is $sim(x[i],y[i])=x[i].y[i]$. The
algorithms can be easily generalized to other similarity functions
which are composed from similarities $sim(x[i], y[i])$ across
individual dimensions.  

The input dataset $V$ may also be interpreted as a data matrix $D$
where row $i$ is vector $v_i$. In this case, we may represent
similarities by the similarity matrix $S = D.D^T$ where
$S_{ij}=dot(v_i,v_j)$ obviously, and we find the set of matches $M =
\{(i,j) \ |\ S_{ij} \geq t \}$. More naturally, we may interpret the
output as a match matrix $M$ that is defined as:
\begin{equation}
  M'_{ij} = 
  \begin{cases}
    0 & \text{if} \ S_{ij} < t, \\
    S_{ij} & \text{if} \ S_{ij} \geq t
  \end{cases}
\end{equation}

The output set of matches $M$ may be considered to define an
undirected similarity graph $G_S(V, t) = (V, M)$. In this case an edge
$ u \leftrightarrow v$ denotes a similarity relation between vectors
$u$ and $v$; the edge weight $w(u,v) = u.v$. 
%The similarity graph
%may be denoted as $G_S$ if the parameters $V$ and $t$ are clear from
%the context.

\subsection{Applications}

An all pairs similarity algorithm may be viewed as a computational
kernel for several tasks in data mining and information retrieval
domains.  In data mining and machine learning, the similarity graph
 may be supplied as input to efficient graph
transduction \cite{Joachims03transductive,Wang08graph}, graph
clustering algorithms \cite{brandes2003} and near-duplicate detection
(by using a high threshold to filter edges). Obviously, once a
similarity graph is computed, classical k-means
\cite{macqueen67,lloyd57} or k-nn algorithms \cite{fix51,cover67},
which are widely used in data mining due to their effectiveness in low
number of dimensions, may be adapted to use the graph instead of
making the geometric calculations directly over input vectors.  As
frequent itemset mining may be viewed as the costly phase of
association rule mining class of algorithms; likewise, the graph
similarity problem may be viewed as the costly phase of several
classification, transduction, and clustering algorithms. 

Calculating
the similarity graph may be alternately viewed as capturing the
essential geometry of (the similarities in) the dataset, on which any
number of computational geometry algorithms may be run. This is
basically what a classification or clustering algorithm does given
similarities in the data: the algorithm tries to find geometric
distinctions, either determining a class boundary for classification,
or identification of clusters by grouping similar points according to
the similarity geometry.  Note also that with an adequate similarity
threshold, we can obtain a connected graph and therefore approximate
\emph{all} similarities in the dataset.  

Constructing the similarity
graph also has the unique advantage in that it can be re-used later
for additional data mining tasks. For instance, one application can
make a hierarchical clustering of the data, and another one can use it
for transduction. Basically, we think that any data mining task that
has a \emph{geometric} interpretation can use the similarity graph as
input successfully. Therefore, we anticipate that the parallel
similarity graph construction will be a staple of future parallel data
mining systems.

\section{Related Work}
\label{sec:relatedwork}

\subsection{k-nearest neighbors problem}

The problem of constructing a similarity graph can be contrasted with
k-nearest neighbors problem, which is a slightly harder problem but
can be solved approximately using a distance threshold. Our use of the
dot-product between two vectors should not be misleading either, as
that corresponds to range search in a corresponding metric space, to
emphasize the close relation between these problems.  At any rate,
some of the same approaches can be adapted to similarity graph
construction, therefore we should take them into account. Especially,
note that most of the difficulties with nearest neighbor search carry
over to our problem.

Due to the curse of dimensionality \cite{marimont79}, the brute-force
algorithm of nearest neighbor search is quite difficult to improve
upon \cite{weber1998}. In practice, there are no advanced geometric
data structures that will give us algorithmic shortcuts
\cite{Chavez2001,asano95}. In the general setting of metric spaces,
the nearest neighbor problem is non-trivial and data structures are
not very effective for high dimensionality \cite{clarkson06}.  This
implies that we cannot rely on space partitioning or metric data
structures that work well in low number of dimensions, although of
course, non-trivial extensions of those methods may prove to be
effective such as combining dimensionality reduction with geometric
data structures.

\subsection{Related sequential algorithms}

\subsubsection{Sequential knn algorithms}

Some popular approaches to solving the nearest neighbor problem may be
summarized as geometric data structures such as
R-Tree\cite{Guttman84r-trees}; VP-Tree \cite{Yianilos93}, GNAT
\cite{Brin:1995} and M-Tree \cite{Ciaccia:1997} for general metric
spaces, pivot-based algorithms \cite{Farago:1993,Bustos2010}, random
projections for $\epsilon$-approximate solutions to the knn problem
\cite{Kleinberg1997}, combining random projections and rank
aggregation for approximation \cite{Fagin2003}, locally sensitive
hashing \cite{Gionis1999,Ailon2006,Andoni2008}, and other data
structures and algorithms for approximations
\cite{Kushilevitz1998,Arya94anoptimal}.  An algorithm related to our
area of interest detects duplicates by using an inverted index
\cite{Ilyinsky2002}. Space-filling curves have also been applied to
the knn problem
\cite{rafajlowicz-knn-sfc,Liao00highdimensional,Mainar-Ruiz:2006}.

Space-partitioning approaches usually do not work well for very
high-dimensional data due to the curse of dimensionality, a thorough
treatment of which is available in \cite{weber1998}. Weber et. al
quantify in that article lower bounds on the average performance of
nerarest neighbor search for space and data partitioning assuming
uniformly distributed points, which show that for space partitioning
like k-d trees, the expected NN-distance grows with increasing
dimensionality, rendering such methods ineffective for
high-dimensional data (full scan needed when $d>60$), and for
data-partitioning the number of blocks that have to be inspected
increase rapidly with increasing number of dimensions, for both
rectangular (full scan is faster when $d>26$) and spherical bounding
regions (full scan when $d>45$), and they also generalize their
results to any clustering scheme that uses convex clusters, not just
these. Their conclusion is that in high-dimensional data, the
partitioning methods all degenerate to sequential search, in uniformly
distributed data.  We emphasize that their results imply that trivial
geometric partitions of the data using hyperplanes or hyperspheres are
mostly ineffective in very high-dimensional data, although they can in
some cases work well for datasets with limited dimensionality or
different distribution.  Weber et. al for this reason propose the
VA-file, which approximates vectors using bitstrings \cite{weber1998}
and improves upon sequential scan. 

In general, it seems that for
solving proximity problems exactly in very high-dimensional datasets,
techniques that prune candidates work well. Kulkarni and Orlandic, on
the contrary, successfully use a data clustering method to optimize
knn search in databases, which the authors show to be better than
sequential scan and VA-file up to 100 dimensions on random datasets
and 56 dimensions on real-world datasets \cite{kulkarni06}, although
it is impossible to know the true efficiency of these algorithms
proposed by database researchers unless they are compared to fast
in-memory algorithms since disk access time dominates the running time
of algorithms that work on secondary storage. Also, such approaches do
not usually scale up to very high number of dimensions.

Note that there are asymptotically optimal nearest neighbor algorithms
in the literature. Vaidya introduces an asymptotically optimal
algorithm for the all nearest neighbors problem which has $O(n \log
n)$ time complexity \cite{vaidya1989}. The same algorithm solves
$k$-nearest neighbors problem in $O(n \log n + k n \log k)$ time,
while Callahan and Korasaraju propose an optimal $k$ nearest neighbors
algorithm which runs in $O(n \log n + kn)$ time
\cite{Callahan92adecomposition}. It is not immediately obvious why
there are no experiments measuring the real-world performance of these
optimal algorithms, however, it is conceivable that they may not have
been practical for high-dimensional datasets, or it may have been 
considered that they require large constant factors.

We refer the reader to Chavez's survey of search methods in metric
spaces \cite{Chavez:2001} for more information on the myriad
algorithms. Chavez identifies three kinds of search algorithms for
metric spaces: pivot-based algorithms, range coarsening algorithms,
and compact partitioning algorithms, and he emphasizes that the search
time of exact algorithms grow with intrinsic dimensionality of the
metric space, which also increases the search radius, and thus makes
it harder to compete with brute-force algorithms. As we have seen,
similar problems also plague search algorithms in Euclidian spaces.
For these reasons, researchers in recent years have turned to
practical optimizations over brute-force algorithms, which we shall
now examine briefly with a good example.

\subsubsection{Practical sequential similarity search}
\label{sec:practicalseq}

In Bayardo et.~al \cite{Bayardo2007Scaling}, the authors propose three
main algorithms which embody a number of heuristic improvements over
the quadratic brute force all-pairs similarity algorithm. 
These algorithms are summarized below. In
the algorithms, each vector $x$ has components with weights $x[i]$,
there are $m$ dimenions (or features) numbered from $1$ to $m$,
$maxweight_i(V)$ is the maximum weight in dimension $i$ of the entire
dataset $V$, and $maxweight(x)$ is the maximum weight in a vector $x$,
following the notation in their paper.

\begin{description}

\item[all-pairs-0] This is equivalent to the brute force algorithm,
  with the additional on-the-fly construction of an inverted index as
  each vector is matched and indexed in turn. The calculation of the
  dot-product scores are achieved by consulting the inverted
  index. Thus each vector is compared to all the previous vectors that
  have been indexed, and then the vector itself is added to the
  index. This algorithm is thus slower than the brute force
  algorithm. In the matching of a new vector $x$, the algorithm uses a
  hash table $A$ to store the weights of candidates to match against
  $x$, since the vectors are sparse. 
  The pseudocode for \var{all-pairs-0} 
  is given in \prettyref{alg:all-pairs-0} and
  \prettyref{alg:find-matches-0}.

  \begin{algorithm}[top]
    \caption{$\algo{All-Pairs-0}(V,t)$}
    \label{alg:all-pairs-0}
    \begin{algorithmic}
      \STATE $M \gets \emptyset$ 
      \STATE $I \gets \algo{Make-Sparse-Matrix}(m, n)$ 
      \FORALL{$v_i \in V$} 
        \STATE $M \gets M \cup \algo{Find-Matches-0}(v_i, I, t)$ 
           \FORALL{$v_i[j]$ where $v_i[j]>0$} 
               \STATE $I_{ji} \gets v_i[j]$
           \ENDFOR
       \ENDFOR
      \RETURN $M$
    \end{algorithmic}
  \end{algorithm}

  \begin{algorithm}[top]
    \caption{$\algo{Find-Matches-0}(x,I,t)$}
    \label{alg:find-matches-0}
    \begin{algorithmic}
      \STATE $A \gets \algo{Make-HashTable()}$ \FORALL{$(i,x[i]) \in
        x$ where $x[i] \neq 0$} \FORALL{$(y, y[i]) \in I_i$}
      \STATE{$A[y] \gets A[y] + x[i]. y[i] $}
      \ENDFOR
      \ENDFOR
      \RETURN $\{ (y, A[y]) \ | \ A[y] \geq t \}$
    \end{algorithmic}
  \end{algorithm}

\item[all-pairs-1] This algorithm orders the dimensions in the order
  of decreasing number of non-zeroes. It corresponds to an important
  optimization that we call ``partial indexing'' which works as
  follows. In preprocessing, we calculate $maxweight_i(V)$ for each
  dimension. This allows us to calculate an upper bound for the
  dot-product of a vector $x$ with any vector in $V$: $ \forall y \in
  V$ $x.y \leq \sum_{i}x[i].maxweight_i(V)$. Using this upper bound it
  is possible to avoid indexing the most dense dimensions by
  calculating a partial upper bound $b$ while processing the
  components of new vector $x$ for indexing. Remember that we are
  processing the components in a certain order (decreasing number of
  nonzeroes of dimensions in $V$). The components are added to the
  inverted index only when the partial upper bound $b$ exceeds $t$,
  the initial components that have small $b$ are not indexed at all,
  they are kept as a partial vector $x'$. Indexing as such ensures
  that all admissible candidate pairs are generated. The dot-product
  is fixed by adding the dot-products of the partial $x'$'s later on.
\item[all-pairs-2] This algorithm affords three optimizations over
  \var{all-pairs-1}.

  \emph{Minsize optimization:} This optimization aims to prune
  candidate vectors with few components.  We know that for a vector
  $x$, for all matches y, $x.y \geq t$.  If the input vectors are
  normalized, then each component can be at most $1$: $x.y <
  maxweight(x) . |y|$. Two inequalities entail that $|y|\geq
  t/maxweight(x)$. Let the quantity on the right be called
  \emph{minsize}'.  Minsize optimization requires the vectors to be
  ordered in order of increasing $maxweight(x)$, thus decreasing
  \emph{minsize}.  If ordered such and the input vectors are normalized,
  during matching a new vector $x$, the minimum size of a candidate
  vector $y$ that x can be matched against is $t/maxweight(x)$. If the
  candidates in the inverted index that are smaller than \emph{minsize} are
  pruned when matching a new vector, this will hold true for all the
  subsequent vectors since \emph{minsize} for subsequent vectors cannot be
  greater. The minsize optimization does not prune a lot of
  candidates, but it may be effective since there may be a lot of very
  small vectors.  It is suggested that \var{all-pairs-2} prunes only vectors
  in the beginning of the inverted list, which is easy to implement
  using dynamically sized arrays.
  % $maxweight(x)/t$.

  \emph{Remscore optimization:} This optimization calculates a maximum
  remaining score (remscore) while processing the components of a
  vector $x$ during matching, using $maxweight_i(V)$ function. When
  remscore drops below $t$ the algorithm switches to a strategy that
  avoids adding any new candidates to the candidate map, while
  continuing to update the candidates already in the map. This avoids
  calculation of scores for candidates that cannot match. Remscore is
  initialized as $\sum_ix[i].maxweight_i(V)$ and as each component $i$
  is processed its contribution to the upper bound
  $x[i].maxweight_i(V)$ is subtracted from the upper bound. And while
  calculating the scores in the candidate map, the aforementioned
  conditional is executed. While this seems to be an excellent
  optimization, in the real-world data we have seen it has only
  inflated the running time, because not the calculation of remscore but
  the conditional reasoning is too expensive within the main loop of
  matching algorithm.

  \emph{Upperbound optimization:} While fixing the scores in the
  candidate map with dot-products of partial vectors (parts of vectors
  that are not indexed), we can avoid the dot-product if the following
  upper bound is not enough to make the score exceed $t$: $min(|y'|,
  |x|). maxweight(x) . maxweight(y')$ which is to say that each scalar
  product in an inner product cannot be more than the product of the
  maximum values in either vector, and only non-zero components
  contribute to the inner product. While this too seems to be a nice
  optimization, it suffers from using conditionals in an otherwise
  efficient code as the partial vectors tend to be short.

\end{description}

\subsubsection{Analysis of all-pairs-0}

\var{All-pairs-0} maintains an inverted index $I$, which
stores an inverted list for each of $m$ dimensions in the dataset,
such that after all the matches are found, for a vector $v_i$ and for all
$v_i[j]$, the inverted index $I$ stores $v_i[j]$, that is $I_{ji} = v_i[j]$.

If the inverted index $I$ is interpreted as a matrix, the rows $I_j$
of the inverted index are the dimensions in the dataset, and $I$ is
merely the transposition of the input matrix $D$, $I=D^T$.  Algorithm
\var{all-pairs-0} performs $\sum_{d=1}^{m}\binom{|I_d}{2}$ floating-point
multiplications, dominating the running time complexity, therefore
each dimension $d$ contributes $\binom{|I_d|}{2}=O(|I_d|^2)$
multiplications.

Since in practice there are usually a few dense dimensions, the running
time complexity is expected to be quadratic in $n$ for real-world
datasets.

\subsection{Related parallel algorithms}

There are only a few relevant studies on efficient parallelization of
the all pairs similarity problem in the literature that we have been
able to detect.

Lin \cite{lin2009} parallelizes the all-pairs similarity problem comparing
parallelizations of both the brute force algorithm that uses no
intermediate data
structures and two algorithms that use an inverted index of the data,
one horizontal and one vertical parallelization (called Posting
Queries and Postings Cartesian Queries algorithms), implemented with
the map/reduce framework Hadoop.  The
algorithm is cast in an information retrieval context where
documents are vectors and terms are dimensions. The experiments are
quite comprehensive and utilize realistic life sciences datasets. The
study in question also compares the performance of three approximate 
solutions: limiting the number of accumulators, considering only top
$n$ terms in a document, and omitting terms above a document frequency
threshold; their results show that significant performance gains can
be obtained from approximate solutions at acceptable loss of
precision. Therefore, Lin suggests that parallelizing the exact
algorithms easily carry over to more efficient inexact 
algorithms. However, there is a slight drawback of this careful study,
as the use of Java language may have caused significant performance
loss in the sequential algorithms, making the job of parallelization
easier, as for 90 thousand documents, their sequential algorithm takes
on the order of hundreds of minutes on a cluster system. Lin does
mention that the code is not optimized and run on a shared,
virtualized environment. In our experience, shared environments are
not suitable for working on memory and communication intensive
problems such as those in information retrieval and data mining. Thus,
we are looking forward to the repetition of the said experiments on a
dedicated parallel computer with a more appropriate high-performance
implementation. This study is also important in that the author
correctly observes the influence of the Zipf-like distribution of
terms on parallel performance.

Recently Awekar et.~al \cite{awekar2010} introduced a task parallelization of the all
pairs similarity problem, sharing a read-only inverted index of the
entire dataset.  The authors use a fast sequential
algorithm which is very similar to our \var{all-pairs-0-array}, which we
also found to be the best sequential algorithm, and thus make adequate
speedup measurements. The authors test three load balancing
strategies, namely block partitioning, round-robin partitioning, and
dynamic partitioning on high-dimensional sparse data-sets with a power
law distribution of vector sizes. Their experiments are executed on up
to 8 processors for large real-world datasets, on both a shared-memory
architecture and a multi-processor system. The speedups on the
multi-processor system turn out to be superior to the shared memory
system as cache-thrashing and memory-bandwidth limitation prevents
near-ideal performance for larger number of processors on
shared-memory systems.  In this study~\cite{awekar2010},  however, there is a major
shortcoming as the index construction and replication costs were not
taken into account in the experiments, which raises doubts as to how
much time is needed for broadcasting such large datasets (e.g., Orkut
dataset has 223 million non-zeroes), as the replication of the entire
inverted index would be a bottleneck for high number of
processors. Therefore, the replicated index algorithm should be taken
with a grain of salt, as well as any parallel algorithm that
replicates the entire dataset, since the size of the inverted index is
the same as the size of the dataset.  At any rate, near-ideal speedup
on up to $8$ processors is not surprising as our vector-wise 
parallelization shows similar performance, as will be seen.

Following are parallelizations of related problems.  Plaku and
Kavraki propose a distributed, message-passing algorithms for
constructing knn graphs of large point sets with arbitrary distance
metric \cite{plaku-distknn}. They can use any local knn data structure
for faster queries (such as a metric tree), which must be built once
the points are distributed to processors.  In addition to this, they
can exploit the triangle inequality of metric function and this
information can be used to construct local queries using the metric
data structure as well as pruning distributed queries, by representing
the bounding hyperspheres of points on other processors.  The
dimensionality of their datasets increases to non-trivial numbers (up
to 1001), and their speed-up results on 100 processors are quite
encouraging.  We think that their method might be applied to our work
as well in the future, to optimize our horizontal parallel algorithms,
however the effectiveness of their approach on very high-dimensional
datasets as we are using remains to be seen, as no sort of space
partitioning usually works well for very high-dimensional datasets due
to the curse of dimensionality.
% The first author's experience in applying the M-Tree
% \cite{Ciaccia:1997} metric space index to information retrieval
% datasets showed that space partitioning data structures did not work
% very well as most points tended to cluster in one path of the tree
% and running time would not improve much over brute force algorithms.
However, it is conceivable that the methods of Plaku and Kavraki could
be used in hybrid approaches to deal with much higher dimensionality.
A shortcoming of this paper is that it does not discuss the
partitioning of the point set, any partition is assumed.

Alsabti et al. \cite{alsabti98} parallelize all pairs similarity search with a k-d tree
variant using two space-partitioning methods based on quantiles and
workload; they find that their method works well for 
12-d randomly generated points on up to 16 processors. Their workload
based partitioning scales better than quantile based partitioning, and
is comparable for uniform and gaussian distributions.  Apar\'{i}cio
et.~al \cite{aparicio2007} use a three-level parallelization of knn
problem at the Grid, MPI and shared memory levels and integrate all three to optimize
performance.  An interesting paper proposes a
parallel clustering algorithm which partitions a similarity graph,
constructs minimum spanning trees for each subgraph and then merges
the minimum spanning trees, which is then used to identify clusters
\cite{olman2009}; this algorithm can be applied to the output of our
algorithms. Schneider \cite{schneider1989} evaluates four parallel join algorithms for
distributed memory parallel computers from a database perspective. 
Vernica et~al. \cite{vernica2010}  propose a three-stage map/reduce
based approach to calculate set similarity joins and report results
using Hadoop; they do consider the self-join case.

Callahan and Kosaroj \cite{Callahan92adecomposition,callahan1995} 
examine the well-separated pair decomposition of
a point set in Euclidian space, which decomposes the set
of all pairs in a point set into pairs of sets with the constraint of
well-separation (defined in a certain geometric sense), wherein each
pair is uniquely represented by a pair of point sets in the
decomposition.  Using their
decomposition, they also obtain an asymptotically optimal parallel knn
algorithm which has $O(\log^2n)$ total parallel time on $O(n)$
processors with the CREW PRAM model.  The real-world
applicability of this wonderfully efficient algorithm remains to be
seen, however. In our initial inspection, we have seen their splitting
logic may be somewhat problematic in text data sets where each
co-ordinate corresponds to the frequency of a term.  It seems that one
way such space decomposition based algorithms may escape the curse of
dimensionality is that the decomposition is far from random, and that
the distribution is not uniform in real-world datasets, although one
may still expect that the approach might break down in very
high-dimensional datasets as their approach is conceptually similar to well
known k-d tree construction algorithms that fail in high-dimensional datasets.

\section{Optimizations to the sequential algorithm}
\label{sec:optseq}

In this section, we examine the optimizations in the sequential
algorithms of \prettyref{sec:practicalseq} detail, as they influence
our parallel algorithm design. 
We have made several other versions of these algorithms
to understand the impact of individual optimizations. This has aided
us in understanding the advantages and disadvantages of said
optimizations and design parallel algorithms. The slowness of
\var{all-pairs-2} compared to \var{all-pairs-1} on our datasets urged us to
understand the impacts of optimizations better.  
\begin{description}
\item[all-pairs-0-array] Although the input vectors are sparse, some
  dimensions are dense in the real-world data that we are using. Thus,
  the hash table $A$ is in fact dense. Using an array instead of a
  hash table improves running time.
\item[all-pairs-0-array2] Tries to optimize \var{all-pairs-0-array} further
  by maintaining a list of candidate indices that are used during
  matching, which are zeroed before finding the matches of the next
  vector.
\item[all-pairs-0-remscore] remscore optimization added to \var{all-pairs-0}
\item[all-pairs-0-minsize] minsize optimization added to \var{all-pairs-0}
\item[all-pairs-1-remscore] remscore optimization added to \var{all-pairs-1}
\item[all-pairs-1-upperbound] remscore optimization added to \var{all-pairs-1}
\item[all-pairs-1-minsize] minsize optimization added to \var{all-pairs-1}
\item[all-pairs-1-remscore-minsize] minsize and remscore optimizations added
  to \var{all-pairs-1}
\item[all-pairs-bruteforce] Brute force algorithm that uses no
  intermediate data structures
\end{description}

The performance comparison of the various implementations on two
datasets is given in \prettyref{sec:seqperf}, in which we see that
\var{all-pairs-0-array} is the fastest implementation, therefore we focus on
parallelizing that algorithm and for the remainder of the paper ignore
other algorithms. Note that on another software platform, perhaps one
of the other variants could be as efficient as \var{all-pairs-0-array}, however,
we think that the wide performance gap would be non-trivial to close.

\section{1-D Parallel Algorithms}
\label{sec:1dpar}

In the following parallel algorithms, let $p$ be the number of
processors and $pid$ be the processor ID of the current processor.  We
will explain our dimension-wise and vector-wise parallelizations,
respectively. We call the dimension-wise parallelization vertical, and
vector-wise parallelization horizontal, for brevity and in analogy
with the matrix representation $D$ of input where each vector is a row.

\subsection{Vertical algorithm: partitioning dimensions}

In vertical parallelizations, each processor holds a number of
dimensions (features), considered to be weighed by the square of
number of nonzeros as for finding the matches of each vector, the
entire inverted list of a dimension has to be scanned, and its
contribution to the candidate matches calculated. 

Each dimension $d$ contributes
\begin{displaymath}
w[d] = |I_d|.(|I_d|+1)/2
\end{displaymath}

multiplications, and thus the entire work may be assumed to take $w =
\sum_i w[i]$.

Since the dimensions are
split across processors, each inverted list is stored wholesome.
To iterate, each dimension has a home processor and each inverted list
corresponding to that dimension also has the same home processor. 
Therefore, each processor is responsible
for calculating the matches in a subspace composed of the dimensions
assigned to it.

Our vertical parallel algorithms essentially parallelize the
inner loop (find-matches phase) of the \var{all-pairs-0-array} algorithm,
while maintaining the sequential order of processing vectors.
Therefore, much attention is devoted to efficient processing of
separate subspaces and merging the candidates, which is the main
parallel overhead of this parallelization.

\subsubsection{Initial distribution}

The simplest distribution is cyclic distribution of dimensions, which
is a random distribution of dimension, however it has turned out to
result in too much load imbalance. Therefore, we use the following
simple partitioning algorithm. The dimensions are sorted in order of
decreasing non-zeroes and the dimensions are binned to $p$ bins so as
to balance the load. To achieve this, we use a first-fit algorithm
that places the next dimension in the least loaded processor.  We
distribute the dimensions before starting and timing the parallel
algorithm.

\subsubsection{Inner-loop parallelization of all-pairs-0}

\prettyref{alg:par-all-pairs-0-vert} depicts the pseudocode for the
basic vertical parallelization of \var{all-pairs-0} kind of algorithms. The
$comm$ variable is the MPI communicator used in the collective
communication operations, it is given as a variable to make the
algorithm re-usable in the 2-D algorithm.  In
\algo{par-all-pairs-0-vert}, first, we calculate the global number of
dimensions by taking the maximum among all processors. Then, we call
the parallel find-matches algorithm for each input vector $x$, which
calculates separate candidate maps on all processors and then
accumulates the candidate scores in parallel before filtering the
candidates.  Each processor thus computes partial candidate scores
independently and synchronously.  Then, scores are accumulated via
collective communication, which results in each processor having a
disjoint set of scores to filter, and the filtering is performed in
parallel.

\begin{algorithm}[top]
  \caption{$\algo{Par-All-Pairs-0-Vert}(V,t, comm)$}
  \label{alg:par-all-pairs-0-vert}
  \begin{algorithmic}
    \STATE $M \gets \emptyset$ \STATE $I \gets
    \algo{Make-Sparse-Matrix}(m, n)$ \FORALL{$x=v_i \in V$} \STATE $M
    \gets M \cup \algo{Par-Find-Matches-0-Vert}(x, I, t)$
    \FORALL{$x[j] \in V$ where $x[j]>0$} \STATE $I_{ji} \gets x[j]$
    \ENDFOR
    \ENDFOR
    \RETURN $M$
  \end{algorithmic}
\end{algorithm}

\subsubsection{Local pruning optimization}
\label{sec:localpruning}

We propose a local pruning optimization for the matching phase.
The parallelization of the inner loop is shown in
\prettyref{alg:par-find-matches-0-vert}. We employ local pruning to
decrease the number of candidates accumulated by collective
communication operations. Let us define $t_{local}$, the local
similarity threshold.
\begin{equation}
  \label{eq:1}
  t_{local} = t/p
\end{equation}
\begin{lemma}
  \label{lem:localpruning}
  Observe that, for any distribution of dimensions, if a candidate is
  matched, that is $sim(x,y) \geq t$, then the local similarity of at
  least one processor should be at least $t_{local}$.
\end{lemma}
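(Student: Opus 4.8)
The plan is to argue by contradiction via a pigeonhole-style averaging over processors. Recall that the similarity is additive over dimensions, $sim(x,y)=\sum_{i}sim(x[i],y[i])=\sum_{i}x[i].y[i]$, and that the dimensions are partitioned (disjointly and exhaustively) across the $p$ processors; each processor $k$ holds some subset $D_k$ of the dimensions with $\bigcup_k D_k = \{1,\dots,m\}$ and $D_k \cap D_{k'}=\emptyset$ for $k\neq k'$. Define the local similarity computed on processor $k$ to be $sim_k(x,y)=\sum_{i\in D_k}x[i].y[i]$, which is exactly the partial candidate score that \algo{Par-Find-Matches-0-Vert} accumulates on that processor.

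First I would observe that because the $D_k$ partition the dimension set, the local similarities sum to the global similarity: $\sum_{k=1}^{p} sim_k(x,y)=\sum_{k=1}^{p}\sum_{i\in D_k}x[i].y[i]=\sum_{i=1}^{m}x[i].y[i]=sim(x,y)$. Next, suppose for contradiction that the claimed conclusion fails, i.e. \emph{every} processor has local similarity strictly below $t_{local}=t/p$, so $sim_k(x,y)<t/p$ for all $k\in\{1,\dots,p\}$. Summing this inequality over the $p$ processors gives $sim(x,y)=\sum_{k=1}^{p} sim_k(x,y) < p\cdot(t/p)=t$, which contradicts the hypothesis that the pair is matched, namely $sim(x,y)\geq t$. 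Hence at least one processor must have $sim_k(x,y)\geq t_{local}$, which is the statement of the lemma.

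The only real subtlety — and the one point I would be careful to state explicitly — is that this requires the local contributions to be \emph{nonnegative-free} of double counting: the partition must be a genuine partition (each dimension assigned to exactly one processor), which the initial distribution in \prettyref{sec:localpruning} guarantees since each dimension has a unique home processor. No sign assumption on $x[i].y[i]$ is actually needed for the averaging argument, since we only use that the parts sum to the whole; the phrase ``for any distribution of dimensions'' in the statement should be read as ``for any partition of the dimensions among processors.'' There is essentially no hard step here: the result is an immediate consequence of additivity of $sim$ plus pigeonhole, and the main thing to get right is pinning down the bookkeeping so that $\sum_k sim_k = sim$ holds exactly.
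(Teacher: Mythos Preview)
Your proof is correct and follows essentially the same approach as the paper's: both argue by contradiction, assuming every processor's local similarity is strictly below $t/p$ and summing to obtain $sim(x,y)<t$. Your version is simply more explicit about the bookkeeping (defining the partition $D_k$ and verifying $\sum_k sim_k=sim$), whereas the paper compresses this into a one-line ``obviously.''
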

\begin{proof}
  Assume that for all $p$ processors, the local similarities $sim(x,y)
  < t_{local}$. Then, obviously, $sim(x,y) < p . t / p$, that is
  $sim(x,y) < t$ which is a contradiction. Therefore, on at least one
  processor, the local similarity is greater than $t_{local}$.
\end{proof}

Making use of this lemma, on each processor we compute the array $A$
of local scores of $x$, and a set of local candidates $C$ which are
the candidates that meet local threshold $t_{local}$
effortlessly. These local scores and candidates are then merged using
a parallel score accumulation algorithm called
\algo{Accumulate-Scores-Vert}.

Note that we use arrays for candidate map instead of a hash table
because it is more efficient in practice.

\begin{algorithm}[top]
  \caption{$\algo{Par-Find-Matches-0-Vert}(x,I, t, comm)$}
  \label{alg:par-find-matches-0-vert}
  \begin{algorithmic}
    \STATE $t_{local} \gets t/p$ \STATE $A \gets \algo{Make-Array}(n)$
    \text{such that} $A[i] = 0$ \FORALL{$(x,x[i]) \in x$} \FORALL{$(y,
      y[j]) \in I[i]$} \STATE{$A[y] \gets A[y] + x[i]. y[i] $}
    \ENDFOR
    \STATE $C \gets \{ (x, y) | A[y] \geq t_{local} \}$ \RETURN
    $\algo{Accumulate-Scores-Vert}(A, C, comm)$
    \ENDFOR
  \end{algorithmic}
\end{algorithm}

\subsubsection{Score accumulation with local pruning}

The scores are accumulated in two communication steps. In the first
step, we perform an all-reduce operation using the binary
operation of set union. At the end of this step, every processor
obtains a $C_g$ of \emph{global} candidates. After this step, since
every processor already has the local scores $A$, which contain all
the local candidates in $C_g$, we take the local scores in $A$ which
are in $C_g$ and put them into a sparse vector $A'$.  On each
processor, for each candidate vector $y$ with weight $w$, we have
\begin{displaymath}
A'[y]=A[y]=w>t_{local} .
\end{displaymath}
 Succeeding that, we compute $A_g$ which is
the summation of sparse vectors on each processor, with the result
partitioned over all processors, so each processor stores a range of
indices of $A_g$. That is, we use a parallel sparse vector addition
algorithm with input and output partitioning.  Thereafter, the $A_g$
can be filtered in parallel to find scores that are at least $t$.

\subsubsection{Recursive local pruning}

In practice, local pruning works quite effectively on two processors,
but due to the nature of observed power-law like distribution of term
frequencies, every binary subdivision almost doubles the number of
candidates. If no local pruning is applied, we have observed that
about $n/2$ candidates are required on the average. With local
pruning, we observe a significant reduction of that number on two
processors (about 10-fold) making the vertical partitioning
competetive with horizontal partitioning.

By observing that local pruning can be applied recursively, we can
decrease the communication volume of the score accumulation
further. Let the dataset matrix $D$ be vertically partitioned $\Pi(D)
= \{D_1,D_2\}$ into roughly equal number of dimensions. Local pruning
result  (\prettyref{lem:localpruning}) entails that the set of candidates is
the union of all similar pairs in both sub-datasets with $t/2$. That
is to say, we obtain a set of candidates by taking the union of local
matches with $t/2$ threshold: $C(D, t) \supset M(D, t) = M(D_1,t/2)
\cup M(D_2, t/2)$. This process can be applied recursively. For
instance, another level of application would yield: $C(D_1, t/2) =
M(D_{11},t/4) \cup M(D_{12}, t/4)$ and $C(D_2, t/2) = M(D_{21},t/4)
\cup M(D_{22}, t/4)$ where $\{ D_{11}, D_{12} \}$ is a vertical
partition of $D_1$ and $\{ D_{21}, D_{22} \}$ is a vertical partition
of $D_2$.

This recursive sub-division suggests an algorithm. We first
recursively partition the dimensions in $k$ levels of recursion. At the bottom level
$k$ of recursion, we can find the matches for $M(D_p, t/2^k)$ where
the dataset label $p$ has $k$ numerals, and communicate these
pair-wise to calculate their union as the $2^{k-1}$ candidate sets for
the higher level. Now, we must compute the matches in the higher level
to calculate the yet higher level candidates and so forth, until we
have the candidates $C(D,t)$ for $D$. The intention here is that,
instead of broadcasting all the bottom level candidates, we are
communicating less. After computing candidates this way, another pass
could be used for score accumulation, but interleaved execution of
candidate generation and score addition steps would be faster.

In our example, consider that we have the candidates $C(D_1, t/2)$ and
$C(D_2, t/2)$ after the first two candidate union operations at the
bottom level $2$. We need a fast method to filter these candidate
sets, and the processors corresponding to $D_1$ must co-operate to
calculate $M(D_1, t/2)$ and likewise for $D_2$. If the candidates are
partitioned over processors in this step, the score accumulation can
be performed fast in parallel. Therefore, we split the candidate set
according to vector indices and communicate scores so that each
processor making up $D_1$ has a portion of the global scores, which
then it can filter to find its portion of $M(D_1, t/2)$. Note that 
this is also a partial score which may be useful to us later on, so we
store it.  Then, yet, in the top level when calculating the matches
$M(D_1, t)$, score accumulation can proceed among processors with
matching vector ranges of scores.

An important consideration in this algorithm is to be able to complete
partial scores. For instance, a vector $x$ may not be a candidate in
$D_1$, $D_{11}$ and $D_{12}$, but it may be a candidate in
$D_2$. Since it wasn't a candidate in any of the candidates in the
recursion sub-tree corresponding to $D_1$, the non-zero scores of $x$
in $D_1$ would have to be added. This requires knowing which
processors contributed to a score, and if there are missing we have to
send requests to those processors and get the missing information. The
processors in a partial sum may be represented with a bit-vector.

\subsubsection{Functional recursive local pruning algorithm}

In \prettyref{alg:merge-scores-rec}, we give a straightforward
functional algorithm for realizing recursive local pruning.  We assume
that we have $p=2^k$ processors, and we apply vertical
bi-partitioning recursively. The following algorithm assumes that
local scores have ben computed on each processors in array $A$.  $x$
is the vector for which matches are sought, and $t$ is the similarity
threshold; $i..j$ denotes an inclusive integer range.

\begin{algorithm}
  \caption{$\algo{Merge-Scores-Rec}(x, A, t, comm)$}
  \label{alg:merge-scores-rec}
  \begin{algorithmic}[1]
    \STATE $ pid \gets \algo{MPI-Rank}(comm)$ \STATE $ p \gets
    \algo{MPI-Size}(comm)$ \IF{p=1} \STATE $M \gets \{ (y, A[y]) \ | \
    {y \in (0...|A|-1)} \land {A[y]>t} \}$ \ELSE \STATE $color \gets $
    if $pid \in (0, p/2-1)$ then $0$ else $1$ \STATE $comm' \gets
    \algo{MPI-Comm-Split}(comm, color, pid)$ \STATE $M' \gets
    \algo{Merge-Scores-Rec}(x, a, t/2, comm')$ \STATE $C \gets
    \algo{Reduce-All}(comm, M', \algo{Union})$ \STATE $AL \gets \{ (y,
    A[y]) \ |\ {y \in C} \land {A[y]>0}\}$ \STATE $AG \gets
    \algo{Accumulate-Scores}(comm, AL)$ \STATE $M \gets \{ (y, w) \ |\
    {(y,w) \in AG} \land {w>t}\}$ \RETURN $M$
    \ENDIF
  \end{algorithmic}
\end{algorithm}

\subsubsection{Flat accumulation algorithm}
\label{sec:flataccum}

Alternatively, we can implement \algo{Accumulate-Scores-Vert} using
the MPI Allgather function for constructing the set union of local $C$
sets, and we can compute $A_g$, where each candidate vector is stored
on a processor. We can compute $A_g$ in distributed fashion by using
$p$ MPI Gather calls, and then locally adding the partial scores
across dimensions.  This is a practical implementation we are using in
our experiments on compute clusters, however a more scalable
implementation may be also developed in the future.

\subsubsection {Hypercube accumulation algorithm}
For accumulation, we can utilize an algorithm inspired by the parallel
quicksort algorithm on hypercube topology. The input to the parallel
accumulation algorithm is an association list of vector id, score
pairs for the current vector $x$.  Each association list is sorted in
the order of vector id's. In the partitioning step of the quicksort-like
accumulation algorithm, the pivot is chosen as the average of random
vector id's from the current subcube, and partitioning is made according
to the vector id accordingly. After the communication step of the
hypercube quicksort-like algorithm, an association list merging
algorithm combines the results so that the entire association list at
hand is sorted, and association pairs with identical vector id's are
collapsed into a single pair with accumulated scores. In the end of
the accumulation algorithm, the output assocation list is partitioned
over the processors so that the filtering of scores is also carried
out in parallel.

\subsubsection{Processing in vector blocks}

Since we process each vector separately in the basic parallelization
outlined above, although the total load of each processor is balanced,
a fine-grain imbalance is caused by the load imbalance of individual
local score calculations of a vector $x$ on each processor. To prevent
this fine-grain imbalance problem, and also decrease the latency
overhead, we process vectors not one by one, but in chunks of vectors,
so that we can use a burst-mode communication. This requires also
bundling the intermediate values so it naturally creates some
algorithmic complexity, but in practice we have seen this to be an
effective optimization for cluster architectures. Therefore we assume
that this optimization has also been made.

\subsubsection{Implementation considerations}
\label{sec:vertimpl}

There are three design options, selecting whether to implement the
local pruning optimization proposed in \prettyref{sec:localpruning}, 
selecting the score accumulation
algorithm which is either of flat  accumulation or
hypercube accumulation, and selecting how many vectors to process at each
communication step for the block processing optimization.

We have implemented all of these different options and tested them.
We implement the local pruning algorithm in the present experiments, because
it is the fastest as it reduces the number of candidates considerably
(an order of magnitude), however there are some bottlenecks in the
current implementation.  Currently, we process in blocks of $64$
vectors. Since each vector incurs memory storage for score arrays and
candidate sets, we cannot process too many vectors at once. However, a
large block size is beneficial for reducing the processing and
communication imbalance across synchronization points.

\subsection{Horizontal algorithm: partitioning vectors }
\label{sec:horiz}

The horizontal parallel algorithm partitions vectors instead of dimensions in
the vertical algorithm.  
In this parallelization, we partition the vectors and
then index and match in parallel without making much
modification to the inner loop (matching), executing matchings in
parallel over disjoint sets of vectors, however having to broadcast
each vector. 
We distribute the vectors in cyclic fashion prior to the invocation and
timing of the horizontal algorithm.

%what is different than \cite{plaku-distknn}?

\begin{center}
  \begin{table*}
    \centering
    \caption{Real-world datasets used in our performance study.}
    \begin{tabular}{l r r r r r l}
      Dataset & $n$ & $m$ & \# non-zeroes & avg. vector size & avg. dim size  & sparsity \\ \hline
      radikal & 6883 & 136447 & 1072472 & 155.8 & 7.8 & 0.00114 \\
      20-newsgroups & 20001 & 313389 & 2984809 & 149.2 & 9.5 & 0.000476\\
      wikipedia & 70115 &1350761 & 43285850 & 617.3 & 32.0 & 0.000457 \\
      facebook & 66568 & 4618973 & 14277455 & 214.5 & 3.1 &  0.0000464 \\
      virgina-tech & 85653 & 367098 & 25827347 & 301.5 & 70.3 & 0.000821
    \end{tabular}
    \label{tab:datasets}
  \end{table*}
\end{center}

\subsubsection{Outer loop parallelization of all-pairs-0}

We now discuss how to parallelize the outer loop of \var{all-pairs-0} kind
of algorithms.
In \algo{Par-All-Pairs-0-Horiz}
(\prettyref{alg:par-all-pairs-0-horiz}), each processor is given a
disjoint set of vectors, i.e., each vector has a home processor.  Each
processor indexes only their local set of vectors; the
inverted index being constructed is partitioned horizontally, 
aligned with the input dataset partition. We pad local list
$V$ of vectors with empty vectors so that each processor has the same
number of vectors, by calculating the maximum number of vectors on a
processor with a collective communication.
For each iteration of the outer loop over local vectors, 
every processor gathers their current vector on all processors,
constructing an array of vectors $xa$ where $xa[proc]$ contains the query
vector from processor $proc$. We then iterate over all processors
$0\ldots p-1$, matching the entire set of $p$ current
query vectors against the local inverted index $I$, using the sequential matching
algorithm \algo{Find-Matches-0} of \var{all-pairs-0}. We process in the same
order on all processors to avoid redundant matches. We carefully index
the local current vector only after it has been matched against the
inverted index.

\begin{algorithm}
  \caption{$\algo{Par-All-Pairs-0-Horiz}(V,t)$}
  \label{alg:par-all-pairs-0-horiz}
  \begin{algorithmic}
    \STATE $M \gets \emptyset$
    \STATE $I \gets
    \algo{Make-Sparse-Matrix}(m, n)$ 
    \STATE Pad $V$ with empty vectors
    \FORALL{$x \in V$}
      \STATE $xa \gets \algo{MPI-All-Gather}(x)$
      \FORALL{$proc \gets 0$ to $ p-1$} 
        \STATE $M \gets M \cup 
        \algo{Find-Matches-0}(xa[proc], I, t)$ 
        \IF{$proc=pid$}
          \FORALL{$x[j] \in V$ where $x[j]>0$} 
            \STATE $I_{ji} \gets x[j]$
          \ENDFOR
        \ENDIF
      \ENDFOR
    \ENDFOR
    \RETURN $M$
  \end{algorithmic}
\end{algorithm}

% \subsection{Overheads and scalability}

% The most significant parallel overhead here is the broadcast of the
% vectors. Therefore, there is a total communication volume of $O(size(V).(p-1))$,
% which limits scalability in high number of processors in practice.
 
% A serious bottleneck of this algorithm is that $I$ is replicated on
% each processor, which would be a problem for truly large
% datasets. While the intermediate data structures of the vertical
% algorithm have size proportional to the number of candidates in an
% iteration, the horizontal algorithm has to construct the entire $I$ on
% each process% or.

% There is no simple solution to either obstruction to scalability
% without making a substantial re-design of the horizontal
% algorithm.

\subsubsection{Optimizations and scalability} 

The block processing optimization may be applied to the horizontal
algorithm to improve load balance, although load balance does not
suffer much in the horizontal parallelization. Initial distribution
may be improved with respect to the random distribution balancing the
vector sizes processed in each vector iteration of the parallel
algorithm. 

Compared to \cite{plaku-distknn}, we make use of the inverted index
construction logic of \var{all-pairs-0-array}, not depending on any complex
geometric data structures, and we make use of efficient collective
communications of the message passing system, and provide a very
sensible synchronization of processing rather than having to deal with
dynamically load balancing, which results in a very elegant
algorithm. 
Nevertheless, one of their optimizations
involving bounding hyperspheres of point sets may be incorporated into
the horizontal algorithm. If the vectors are initially partitioned
geometrically, instead of cylically or according to sizes, then
bounding regions may be defined over each processor, and it may be
possible to skip some communication and computation, although we would
not expect much gain from such a computation.

The most significant parallel overhead here is the broadcast of the
vectors. Therefore, there is a total communication volume of
$size(V).(p-1)$ vector elements,
which limits scalability in high number of processors in practice.  
We have found no simple solution to this obstacle to scalability
without making a substantial re-design of the horizontal
algorithm.

\section{2-D Parallel Algorithm}
\label{sec:2dpar}

Let there be a 2-D processor mesh with $q$ rows and $r$ columns.  The
two-dimensional data partitioning algorithm combines the vertical
and horizontal parallelization as two respective levels of
parallelization. First, we make a checkerboard partitioning of the
input dataset where we distribute dimensions into $r$ columns so as to
balance load across processor columns, and we distribute vectors into
$q$ rows in cyclic order.  Therefore, to each processor, we assign a
set of vectors and a set of dimensions.
\prettyref{alg:par-all-pairs-2d} shows the pseudocode for the
\algo{Par-All-Pairs-0-2D} algorithm. We assume in the following 2-D
algorithm that $mycol$ is the processor column of the current
processor, $colid$ is the current processor's identifier in $mycol$,
$myrow$ is the processor row of the current processor, and $rowid$ is
the current processor's identifier in $myrow$.

\begin{algorithm}
  \caption{$\algo{Par-All-Pairs-0-2D}(V,t)$}
  \label{alg:par-all-pairs-2d}
  \begin{algorithmic}
    \STATE $M \gets \emptyset$ 
    \STATE $I \gets \algo{Make-Sparse-Matrix}(m, n)$ 
    \STATE Pad $V$ with empty vectors
    \FORALL{$x=v_i \in V$} 
       \STATE $xa \gets \algo{MPI-All-Gather}(x, mycol)$ 
       \FORALL{$proc \gets 0\quad$ to $q-1$} 
         \STATE $M \gets M \cup \algo{Par-Find-Matches-0-Vert}
         \linebreak (xa[proc], I, t, myrow)$ 
         \IF{$proc = colid$} 
           \FORALL{$x[j] \in V$ where $x[j]>0$}
             \STATE $I_{ji} \gets x[j]$
           \ENDFOR
         \ENDIF
      \ENDFOR
    \ENDFOR
    \RETURN $M$
  \end{algorithmic}
\end{algorithm}

The two parallelizations can be elegantly combined by re-using the
horizontal parallelization in the first level of parallelization and
the vertical parallelization in the second level. Passing the $mycol$
communicator to the vertical parallelization let us re-use the
vertical algorithm with no modification.

The block optimization of the vertical algorithm has also been tried
in our 2D experiments, but was found to cause more overhead compared
to the one that does not block input vectors.  In general, the
implementation of the vertical algorithm was found to have a
significant amount of garbage collection overhead since a lot of
intermediate data is constructed and then discarded in the vertical
algorithm (This accounts for about $30\%$ of the running time). This
overhead shows that there is room for improvement in the
implementation of our 1-D vertical and 2-D algorithms due to the
OCaml runtime overhead.

\section{Performance Study}
\label{sec:perf}

We first explain the datasets used and take a look at how the variants
of the sequential all-pairs algorithms stack up. We have based our
parallelizations on these results.   Then, we demonstrate the parallel
performance of our vertical, horizontal and 2D parallelizations. The
parallelizations do show enough diversity in performance to justify
the need for multiple parallelizations.

% And third, we investigate the cause of parallel overhead in vertical
% parallelizations.

\subsection{Datasets}

We have based our performance study on real-world datasets, with no
tuning that will make our task easier. We have made our experiments on
two small and three large such datasets, the properties of which are
summarized in \prettyref{tab:datasets}. The columns of
\prettyref{tab:datasets} display the dataset name, number of vectors
($n$), number of dimensions ($m$), number of non-zeroes (sum of
$|x|$'s), average vector size (average of $|x|$'s), average dimension
size (average of $|I_d|$'s) and sparsity (number of non-zeroes divided
by $n.m$), respectively.

Radikal data set contains 6893 short news articles from the website of
Radikal Turkish newspaper, partitioned to 14 newspaper sections.  The
HTML documents were converted to text and converted to vector space
representation using TFIDF weighting.  20-newsgroups is a classical
text categorization dataset which consists of one thousand posts taken
from 20 USENET newsgroups.  The large datasets are downloaded from the
Stanford WebBase Project \cite{hirai2000}.  The facebook dataset is
composed of pages collected from Facebook on 09-08-2008. The wikipedia
dataset is composed of pages collected from Wikipedia on 05-2006. The
virgina-tech dataset is composed of pages collected from sites related
with Virginia Tech shooting on 04-23-2007.

\subsection{Implementation}

We have implemented the algorithms using OCaml programming language
3.12, and the OCaml MPI bindings for communication.
We have found OCaml to be effective for implementing complicated
algorithms  due to type safety and high abstraction level, while maintaining
performance.

Since OCaml does not have 32-bit floating point values, we resorted
to a fixed point implementation that uses 32 bits to store numbers and
reserves a number of fixed bits to integer and decimal point parts.
In very few cases there is some loss of accuracy which causes some
pairs to be missed but that is insignificant enough that we will not
analyze it.

We have in general paid attention to low-level issues and used fast
data structures such as arrays and lists where applicable.  For the
hash tables in candidate maps of original \var{all-pairs-0}, \var{all-pairs-1},
and \var{all-pairs-2} algorithms, we used OCaml's Hashtbl implementation in
the standard library. We initialize the hash table with one fourth of
the number of vectors.  For document vectors we used compressed row
storage, on arrays. For inverted lists we used dynamically sized
vectors with a fast way ($O(1)$) to pop the beginning of the inverted
list, which is required by the minsize optimization.

OCaml was quite suitable for implementing parallel algorithms in a
concise and reliable manner. The watertight strong typing of OCaml
helps implement algorithms with very few errors; almost all ordinary
programming errors are caught by the type checker. 
We have found that mixing functional and
imperative programming styles was quite natural in OCaml, making the
resultant code quite readable. \prettyref{fig:hypercube-mnac-all} and
\prettyref{fig:hypercube-accumulate} give a sample of real-world OCaml
code in our project. \prettyref{fig:hypercube-mnac-all} is a multi
node accumulation code that accumulates to all processors on a
hypercube network, and is the equivalent of MPI Allreduce operation;
note that the binary reduction operator is simply a
function. \prettyref{fig:hypercube-accumulate}  shows a 
fast parallelization of the score accumulation algorithm where 
each \var{key} is given a home processor $\var{key} \bmod p$.
We make p collective communications, in each of which we 
reduce to processor p pairs with the key assigned to it. The
reduction is performed by multi node accumulation to all in which
the combination operator merges two association lists into one. The
other score accumulation algorithms are also implemented at a
high-level similarly; however, since multiple optimizations are
employed there has been a moderate amount of code complexity.

\begin{figure*}
  \centering
\begin{verbatim}
let hypercube_mnac_all ?(comm=Mpi.comm_world) (a: 'a) op : 'a  =
  let p = Mpi.comm_size comm and pid = Mpi.comm_rank comm in
  let d = Util.log2_int p in
  let result = ref a in
    if debug then lprintf "d=%d\n" d;
    for dim=d-1 downto 0 do (* process dimensions of hypercube *)
      let partner = pid lxor (1 lsl dim) in
        if debug then lprintf "partner=%d\n" partner;
        let partners_result = exchange ~comm:comm !result pid partner in 
          result := op !result partners_result
    done;
    !result
\end{verbatim}
  \caption{OCaml code for multi node accumulation to all processors on a hypercube network}
  \label{fig:hypercube-mnac-all}
\end{figure*}

\begin{figure*}
  \centering
\begin{verbatim}
  let hypercube_accumulate_scores_fast al =
  let alslice = Array.make p [] in
    (*partition pairs according to cyclic-distribution of key *)
    List.iter
      (fun (key,weight)->
         let dest = key mod p in
           alslice.(dest) <- (key,weight)::alslice.(dest)
      ) al;
    let alslice = Array.map List.rev alslice in 
    let result = ref [] in
      for dest=0 to p-1 do
        let x = hypercube_mnac_all alslice.(dest) merge_als in
          if dest=pid then
            result := x
      done;
      !result
\end{verbatim}
  \caption{OCaml code for fast score accumulation using hypercube topology }
  \label{fig:hypercube-accumulate}
\end{figure*}

\subsection{Sequential performance}
\label{sec:seqperf}

\begin{table}[t]
  \centering
  \caption{Sequential running time on radikal dataset}
 \begin{tabular}{l r r r}
    Algorithm & $t=0.2$ & $t=0.3$ & $t=0.4$ \\ \hline
    all-pairs-0 &141.62 &142.34 &143.14\\ 
    \emph{all-pairs-0-array} & \emph{24.57} &\emph{24.44} &\emph{24.74}\\ 
    all-pairs-0-array2 & 29.50 &29.37& 29.53\\ 
    all-pairs-0-remscore& 180.21 &179.75& 180.41\\ 
    all-pairs-0-minsize& 149.37 &149.70 &149.43\\ 
    all-pairs-1& 87.10 & 79.05 &71.73\\ 
    all-pairs-1-array & 73.02 &69.54 & 64.40\\ 
    all-pairs-1-remscore& 180.55 & 181.79& 182.02\\ 
    all-pairs-1-upperbound &200.96 & 171.42 & 145.90\\ 
    all-pairs-1-minsize & 89.57 & 80.52 &72.21\\ 
    all-pairs-1-remscore-minsize & 93.31 & 82.70 & 73.42\\ 
    all-pairs-2& 198.92 & 165.64 & 138.98 \\
    all-pairs-bruteforce & 183.06 &183.32 &183.28 \\ \hline
  \end{tabular}
  \label{tab:radikalseqtime}
\end{table}

\begin{table}
  \centering
  \caption{Sequential running time on 20-newsgroups dataset }
  \begin{tabular}{l r r r}
    Algorithm & $t=0.4$& $t=0.5$& $t=0.6$\\ \hline
    all-pairs-0 &2887.1 &2904.7 &2900.9\\ 
    \emph{all-pairs-0-array} &\emph{480.0} &\emph{495.4} &\emph{478.0}\\ 
    all-pairs-0-array2 & 596.4 & 618.0 & 595.6\\ 
    all-pairs-0-remscore & 3482.2 & 3486.1 & 3501.2\\ 
    all-pairs-0-minsize&3171.6 & 3180.7 & 3191.4\\ 
    all-pairs-1 &1169.0 & 1035.6 & 882.9\\ 
    all-pairs-1-array & 883.8 & 837.0 &757.5\\ 
    all-pairs-1-remscore&3502.0 & 3499.9 & 3497.7\\ 
    all-pairs-1-upperbound&1940.8 &1649.5 & 1410.1\\ 
    all-pairs-1-minsize&1190.0 & 1023.5 & 902.6\\ 
    all-pairs-1-remscore-minsize & 1288.5 & 1086.4 & 943.6\\ 
    all-pairs-2 &1733.8 & 1450 & 1190.2\\ 
    all-pairs-bruteforce & 1866.3 & 1867.2 & 1871.5 \\ \hline
  \end{tabular}
  \label{tab:20-newsgroupsseqtime}
\end{table}

Table \ref{tab:radikalseqtime} shows the running times of various
sequential all-pairs algorithms on radikal dataset with a few
meaningful support thresholds.  Likewise, Table
\ref{tab:20-newsgroupsseqtime} shows the running time on the
20-newsgroups dataset. The dot-product thresholds were chosen so that
we obtain roughly $n.\lg(n)$ pairs for $n$ vectors and increase the
threshold until we have about $n$ pairs. $n.\lg(n)$ pairs should be
sufficient to construct a well connected epsilon neighborhood graph,
given each vector has about $\lg(n)$ neighbors, since it is well-known
that to establish inter-cluster connectivity setting $k \sim \log(n)$
is the lowest sufficient rate for knn graphs \cite{brito1997}.

We have had to compare the effects of different optimization
strategies so that we could determine which algorithm could be
parallelized best. In case, there is a clearly best algorithm, the
other parallelizations would be redundant. Not all optimization
strategies may be parallelized well, either. The effect of
optimizations may also depend on the software and hardware platform
used. Therefore, we made variations on the original \var{all-pairs-0}
algorithm so that optimizations were tested individually and in
combination. Surprisingly, it turns out that the best algorithm is an
optimization of \var{all-pairs-0} itself that uses arrays instead of hash
tables for score accumulation, named \var{all-pairs-0-array} in the tables.
The running times show that it is actually quite difficult to improve
upon the brute force algorithm. While that may sound frustrating, it
also means that there is positive research potential in designing new
all pairs similarity algorithms, since it is known that the optimal
nearest neighbor query algorithms in $R^d$ have low running time
complexity \cite{callahan1995}. With some work, those results could
carry over to real-world data with high dimensionality and sparse
vectors.

Algorithm \var{all-pairs-0-array2} fares worse than \var{all-pairs-0-array}
unexpectedly, the only difference in the former is maintaining a list
so that the written entries can be zeroed out in the next
iteration. It seems that this list maintenance and zeroing only
written entries is more expensive than zeroing out the entire array,
suggesting that the non-zero entries are too many for this
optimization. We also see that, somewhat unexpectedly, \var{all-pairs-2}
does not improve on \var{all-pairs-1}. However, the minsize optimization
over \var{all-pairs-1} improves on \var{all-pairs-1}, which is part of
\var{all-pairs-2}. The other two optimizations of \var{all-pairs-2}
(\var{all-pairs-1-remscore} and \var{all-pairs-1-upperbound}) apparently slow down
\var{all-pairs-1} instead of accelerating it. \var{All-pairs-1-remscore-minsize}
is also worse than \var{all-pairs-1-minscore}, suggesting that the remscore
optimization is not useful at all. \var{All-pairs-2} is almost as slow as
bruteforce for lower thresholds, so it may not be a very meaningful
algorithm to study. Even at higher thresholds, where there are too few
outputs, the results do not change significantly.  Still, the most
interesting result is that, \var{all-pairs-0-array} is much better than all
of those optimizations. The array optimization carries over to
\var{all-pairs-1}, but \var{all-pairs-1-array} still does not match
\var{all-pairs-0-array}, so we did not feel obliged to apply it to
other variants. Likewise, stand-alone optimizations over \var{all-pairs-0}
(\var{all-pairs-0-remscore}, and \var{all-pairs-0-minsize}) seem to be worse than
the brute force algorithm therefore they were not worth pursuing.

While in this work, we focus on the parallelization of existing
algorithms and their variants, we have also examined the real reason
for the high running times. The first reason is that there are a lot
of dimensions, breaking down easy separability of points, and second
and most importantly that the density of the dimensions follow a
power-law distribution which introduces an almost irreducible
complexity in the processing of the densest dimensions. The reason why
partial indexing optimization is more effective than the other
optimizations is that it separates the processing into a dense and a
sparse phase, where a brute force algorithm is applied to the dense
part of the data and an indexing approach is applied to the sparse
part, definitely improving over the plain brute force algorithm
especially in the case of higher thresholds as can be seen in the
running times. Still, the improvement seems to be on a constant order,
which is interesting as it suggests that there is no asymptotic
improvement.

\subsection{Parallel performance}
\label{sec:speedup}

Our experiments were carried out at the TUBITAK ULAKBIM High
Performance Computing Center, which is comprised of $48$-core
multi-processor nodes built with AMD Opteron 6172 processors,
interconnected with an Infiniband network, running GNU/Linux operating
system. In the rest of the paper, we use processor and core
interchangeably.  For each dataset, we worked on a single, meaningful
similarity threshold.  The similarity thresholds for each dataset were
chosen so that they would result in a well-connected similarity
graph. We again followed the notion of allowing about $n \log n$
similar pairs in the output as a rough guideline, we made sure that we
obtained a significant number of similar
pairs. \prettyref{tab:problems} shows the problem instances used in
our parallel performance study; the columns display the similarity
threshold $t$, the running time of the sequential algorithm
\var{all-pairs-0-array} and the number of similar pairs output.  For 1-D
algorithms, we ran our algorithms up to 256 processors on the small
datasets and up to 64 processors on the large datasets due to resource
limitations on the batch system of the supercomputer (128 in one
instance where we could run it). For the 2-D algorithms, we have tried
different combinations of numbers of processor rows and columns in the
virtual mesh, again going up to 256 at most for small datasets and 128
processors at most for large datasets.

\begin{table}
  \centering
  \caption{The problem instances used in our study.}
  \begin{tabular}{l r r r r r l}
    Dataset &  $t$ & Time & Matches \\ \hline
    radikal & 0.2 & 15.5  & 16810 \\
    20-newsgroups & 0.4 & 317.3  & 64396 \\
    wikipedia & 0.9  & 54424.0 & 747999 \\
    facebook & 0.99 & 10777.8 & 819196\\
    virgina-tech & 0.99 & 10426.2 & 13447874
  \end{tabular}
  \label{tab:problems}
\end{table}

We have performed our experiments on three algorithms. The vertical
parallel algorithm is \prettyref{alg:par-all-pairs-0-vert} that uses
the major optimization of score accumulation with local pruning
(\prettyref{lem:localpruning}), since it provides the best results.
%Otherwise, it is not possible to get
%good speedup beyond $2$ cores, the results of which would clutter the
%results and would be infeasible for most of the experiments.
 We use the flat score accumulation algorithm in the experiments
(\prettyref{sec:flataccum}), the other choices were covered in
\prettyref{sec:vertimpl}. The horizontal parallel algorithm is
\prettyref{alg:par-all-pairs-0-horiz}, explained in
\prettyref{sec:horiz}.

\prettyref{fig:speedup-small} shows the parallel speedup of our
vertical and horizontal algorithms on the smaller two datasets:
radikal and 20-newsgroups. Likewise, \prettyref{fig:2dspeedup-small}
shows the speedup of our 2-D algorithm on the same datasets.
Similarly, \prettyref{fig:speedup-large} depicts the parallel speedup
of our 1-D algorithms on the larger three datasets: wikipedia,
facebook and virgina-tech, while \prettyref{fig:2dspeedup-large} gives
the speedups of the 2-D algorithm for the same three datasets.  The
processor configurations of the 2D algorithm are indicated as $p
\times q$ on the $x$-axis where $p$ is the number of the processor
rows and $q$ is the number of processor columns. Note that the
vertical algorithm was run on up to $32$ processors on small datasets,
and up to $16$ processors for large datasets, beyond which the
algorithm becomes infeasible to run. A few of the results are missing
in the 2-D algorithm plots in \prettyref{fig:2dspeedup-large} due to
unresolvable system problems that we encountered on the shared
supersomputer that we used. The system consistently stalled when
we submitted some large parallel jobs, possibly due to a bug in the
interconnection network. It should be clear from the figures that the
few missing data points do not change the general picture.

\begin{figure*}[htbp]
  \begin{center}
    \subfigure[radikal]{
      \includegraphics[scale=0.435]{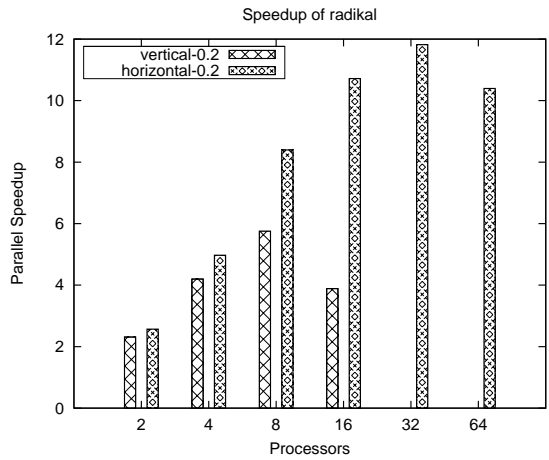}
    } \subfigure[20-newsgroups]{
      \includegraphics[scale=0.435]{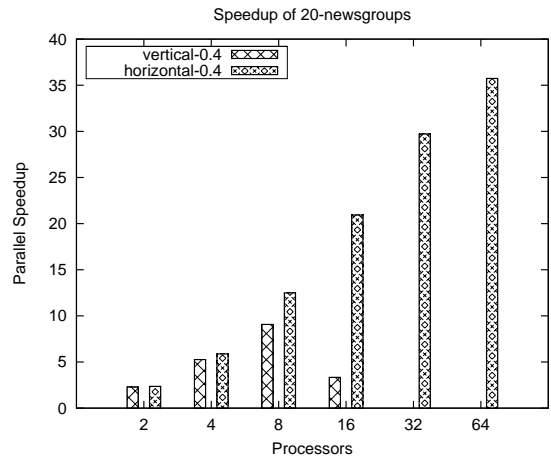}
    }
    \caption{Parallel speedup of horizontal and vertical algorithms on
      small datasets radikal and 20-newsgroups}
    \label{fig:speedup-small}
  \end{center}
\end{figure*}

\begin{figure*}[htbp]
  \begin{center}
    \subfigure[radikal]{
      \includegraphics[scale=0.435]{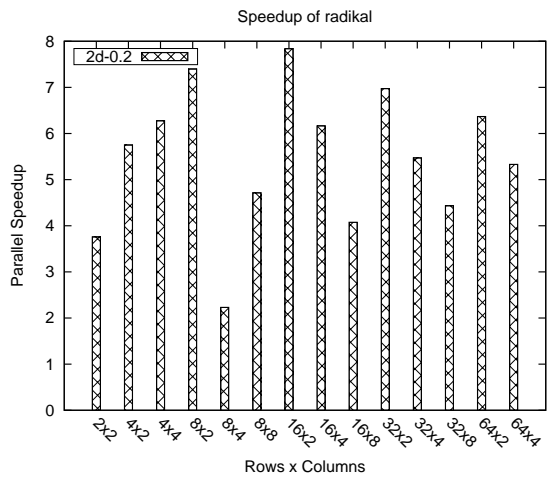}
    } \subfigure[20-newsgroups]{
      \includegraphics[scale=0.435]{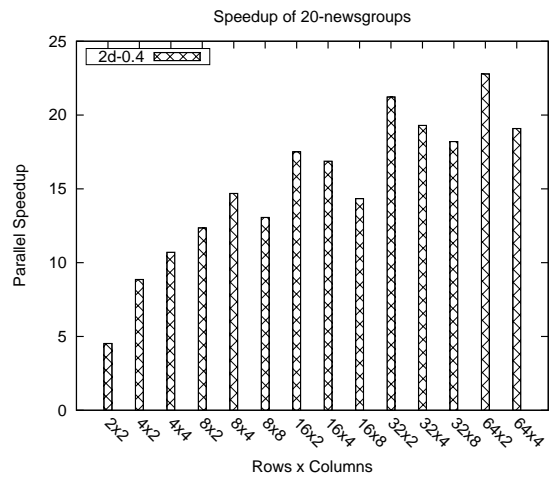}
    }
    \caption{Parallel speedup of the 2D algorithm on small datasets
      radikal and 20-newsgroups}
    \label{fig:2dspeedup-small}
  \end{center}
\end{figure*}

\begin{figure}[!]
  \begin{center}
    \subfigure[wikipedia]{
      \includegraphics[scale=0.435]{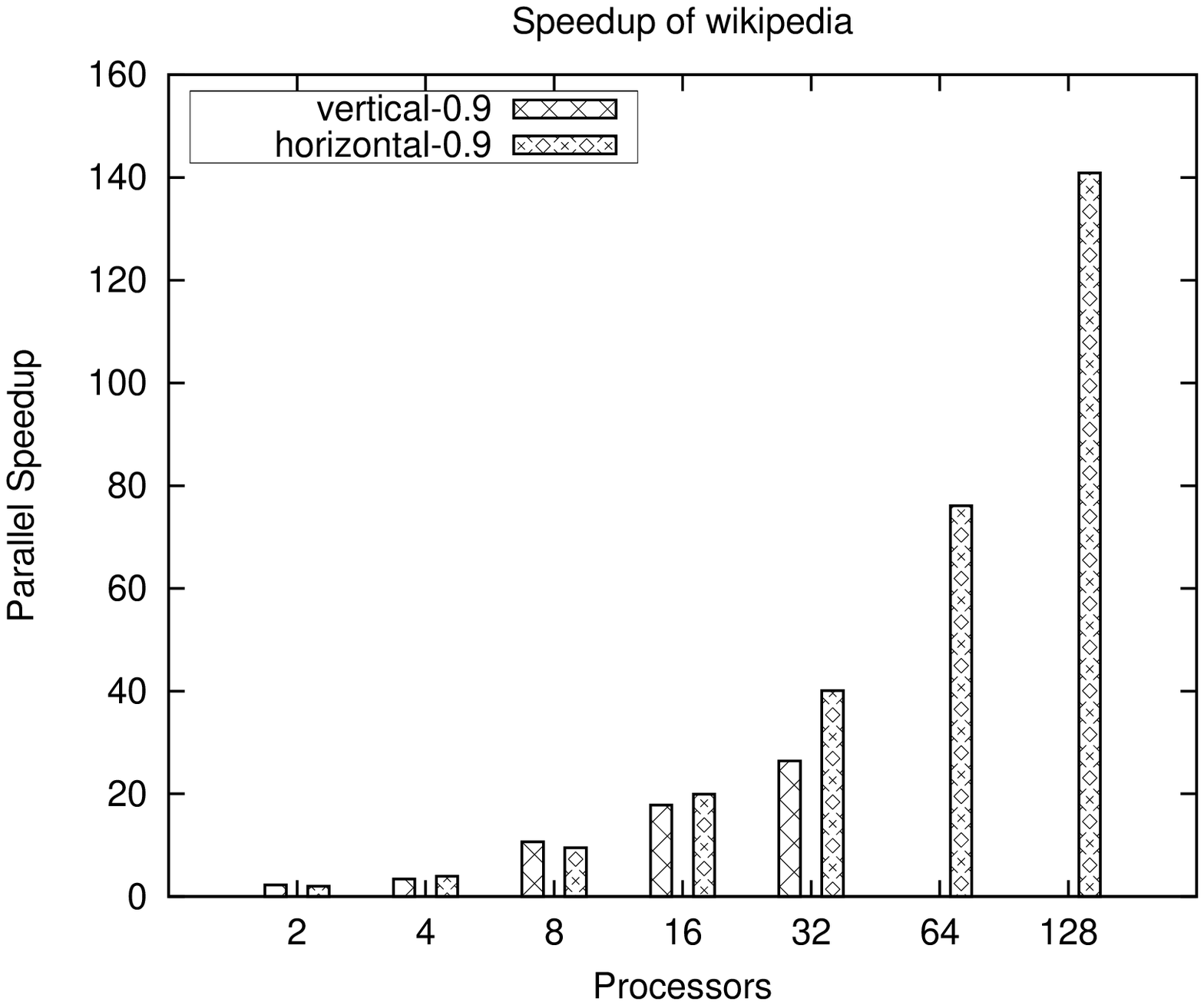}
    } \subfigure[facebook]{
      \includegraphics[scale=0.435]{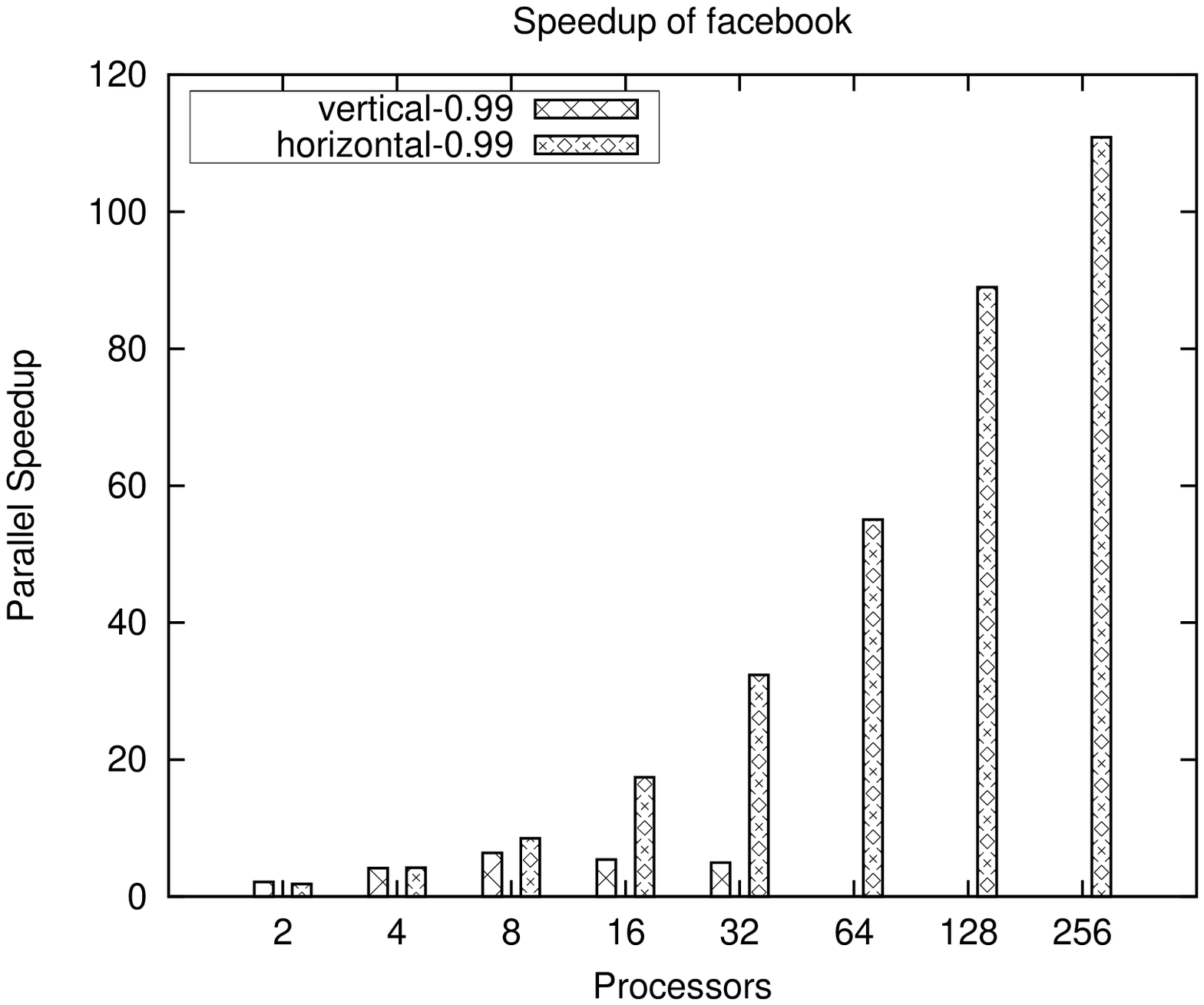}
    } \subfigure[virginia-tech]{
      \includegraphics[scale=0.435]{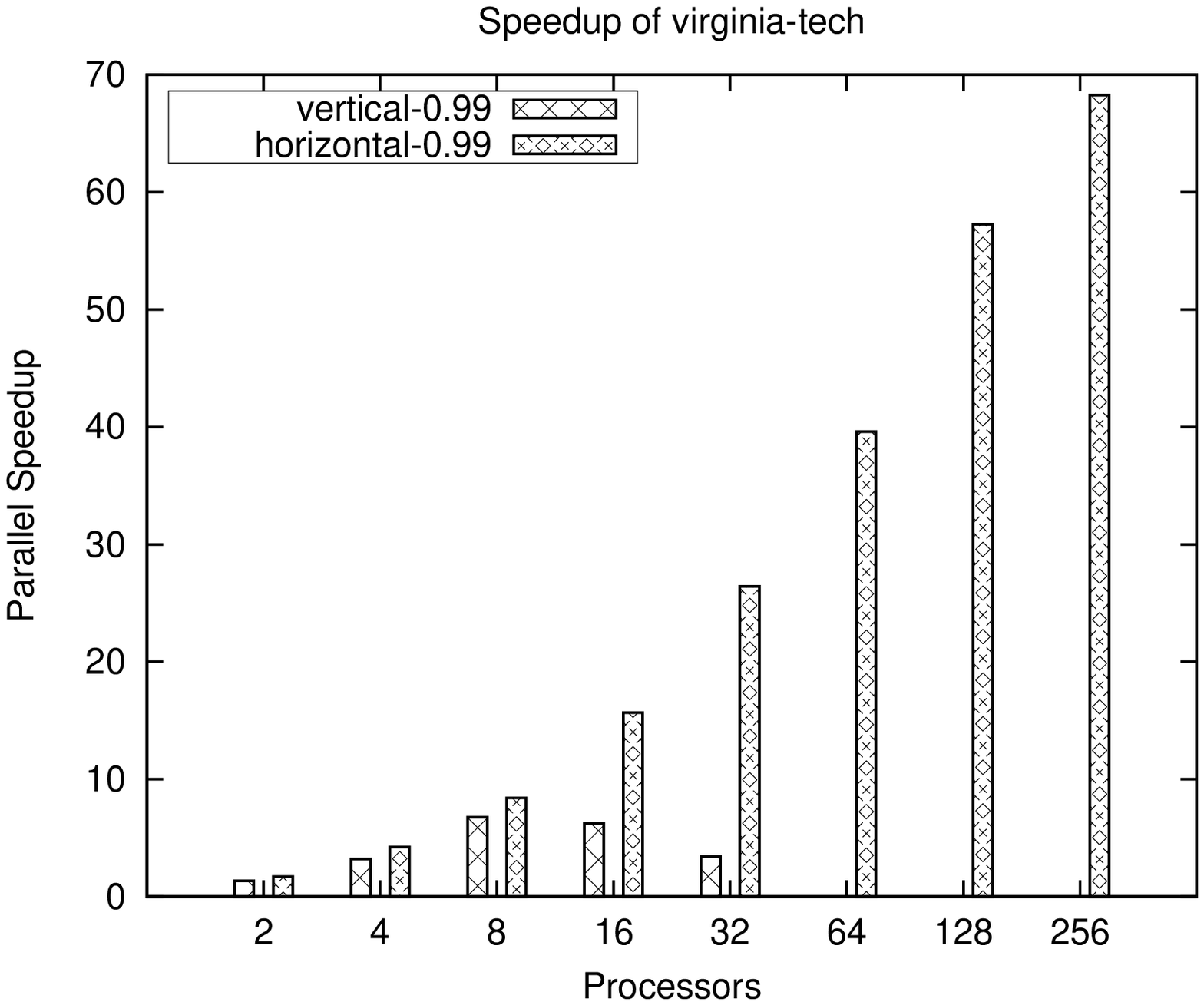}
    }
    \caption{Parallel speedup of horizontal and vertical algorithms on
      the large datasets: wikipedia, facebook, virginia-tech}
    \label{fig:speedup-large}
  \end{center}
\end{figure}

\begin{figure}[!]
  \begin{center}
    \subfigure[wikipedia]{
      \includegraphics[scale=0.435]{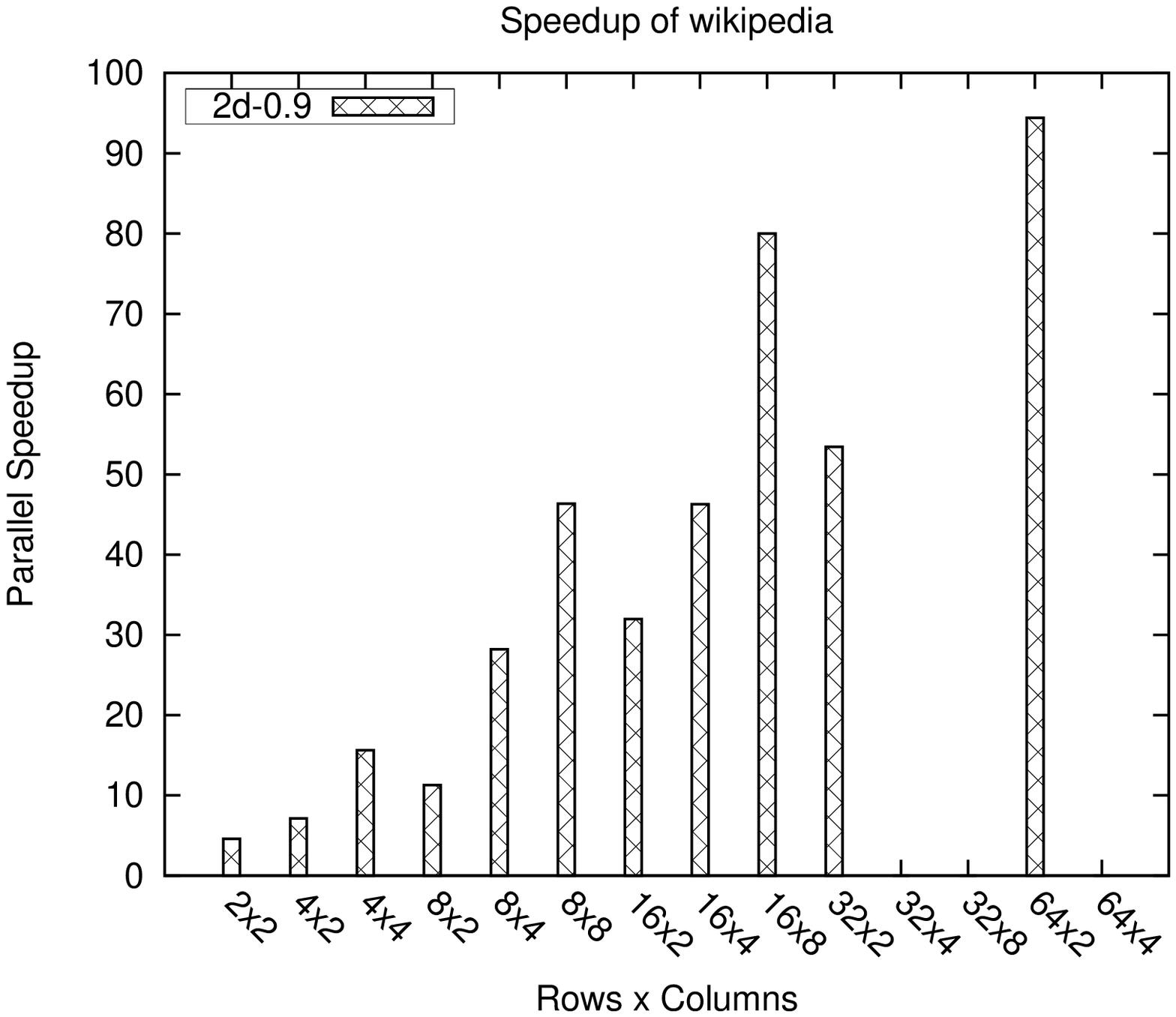}
    } \subfigure[facebook]{
      \includegraphics[scale=0.435]{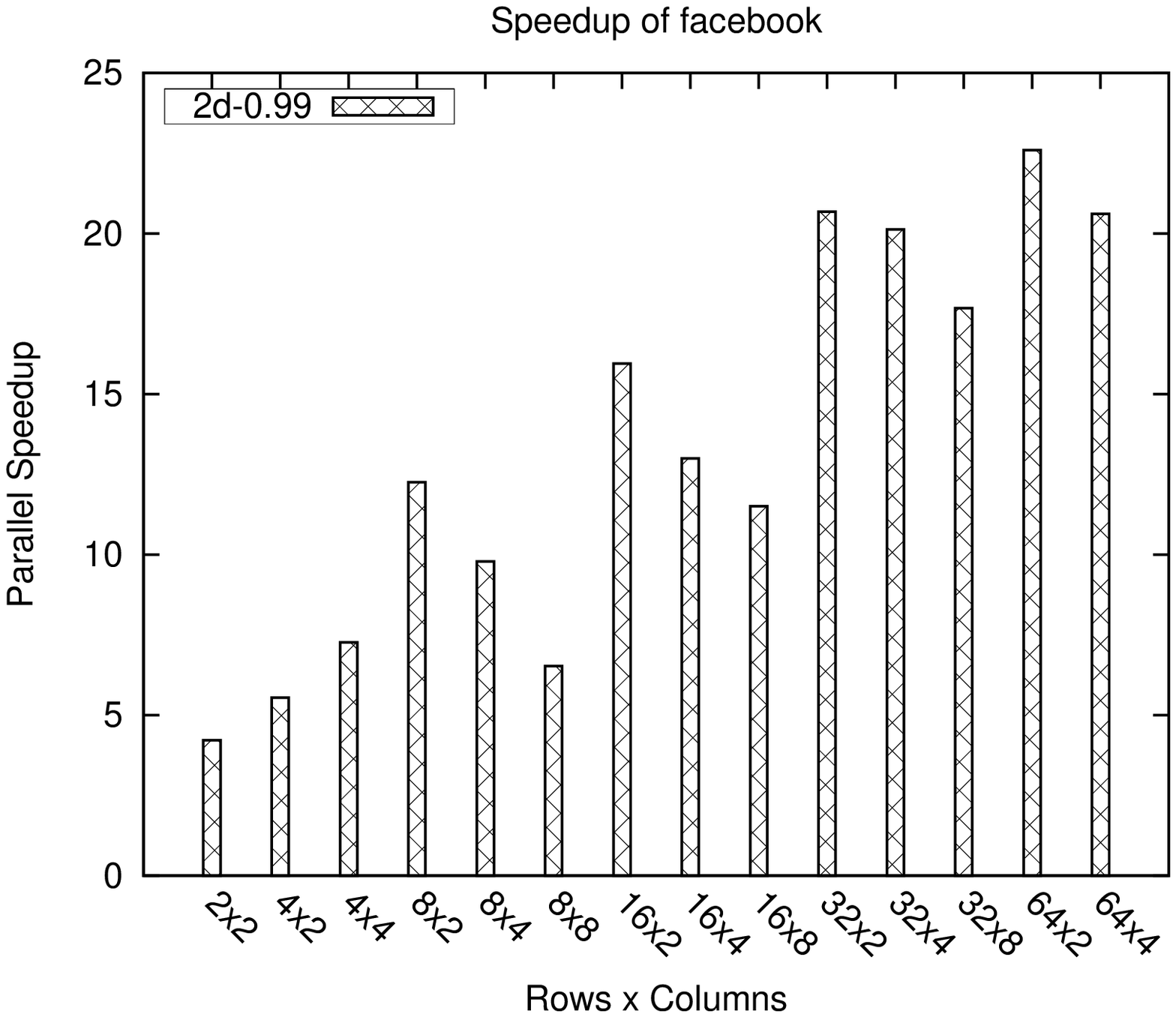}
    } \subfigure[virginia-tech]{
      \includegraphics[scale=0.435]{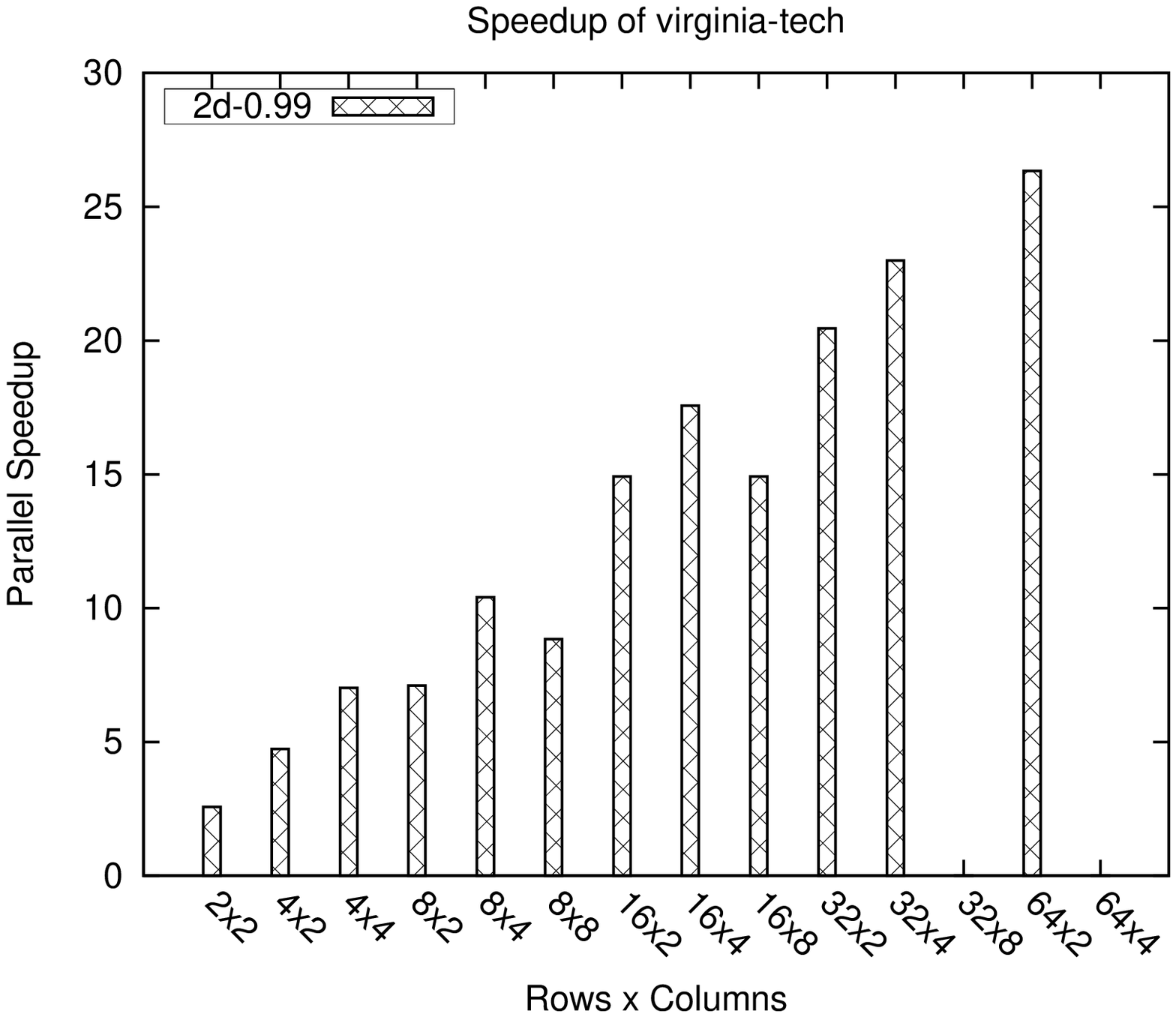}
    }
    \caption{Parallel speedup of the 2D algorithm on the large
      datasets: wikipedia, facebook, virginia-tech}
    \label{fig:2dspeedup-large}
  \end{center}
\end{figure}

As seen in \prettyref{fig:speedup-small} and
\prettyref{fig:speedup-large},
we see that the horizontal algorithm scales better
than the vertical algorithm for all datasets. The vertical algorithm scales well up to
$8$ processors, but after that it loses quite a bit of steam.  It is
still quite an achievement that the vertical algorithm scales as much,
since the number of processors increase the communication volume and
communication asynchrony rapidly despite the local pruning
optimization. The horizontal algorithm scales well up to $32$
processors and then starts to slow down due to the fact that the
broadcast starts becoming significant. This is most apparent in
radikal dataset, but it is also seen in other datasets that the
speedup does not accelerate as much, as we go up to $64$ processors.
We observe that both vertical and horizontal parallelizations achieve
super-linear speedups in several cases, affirming the efficiency of
our implementation, as in those cases the algorithms make better use
of the memory hierarchy. In two cases, we see that the vertical
algorithm achieves better speedup than the horizontal algorithm,
justifying the usefulness of our vertical algorithm. 

The 2-D algorithm
shows varying performance according to the processor
configuration as seen in  \prettyref{fig:2dspeedup-small} and \prettyref{fig:2dspeedup-large} . Since the vertical algorithm did not scale further than
$8$ processors, we did not try more processor columns in the virtual
mesh. We sometimes see excellent speedups with the 2-D algorithm, for
instance in wikipedia, $4 \times 4$ yields super-linear speedup and
$16 \times 8$ yields about $80$ speedup. However, on the average, the
2-D algorithm's performance is between that of the horizontal and
vertical algorithms, it is usually better than half of the speedup of
the horizontal algorithm for the maximum number of processors although
for facebook dataset it's slightly worse than that.

\subsection{Local pruning and block processing optimizations}

It is useful to understand the performance impact of local pruning and
block processing optimizations for the vertical algorithm. Without
those optimizations, the vertical algorithm is futile, it would not be
quite possible to apply it to sufficiently many cases. Therefore, we
show its performance, when neither optimization is applied, and when
only local pruning is applied, together with happens when both
optimizations are turned on.

We have chosen the smaller two datasets radikal and 20-newsgroups for
this comparison, because some of the runs would be infeasible for the
larger datasets. We run them on up to $16$ processors, which is
sufficient to illustrate the performance differences.
\prettyref{fig:algocomparison} shows how speedup varies for different
vertical algorithms on small datasets, comparing the unoptimized
vertical algorithm (vertical-noopt), the vertical algorithm with local
pruning optimization only (vertical-localpruning) and the vertical
algorithm with both local pruning and block processing optimizations
applied (vertical-bothopt). It is clearly seen that local pruning
improves over no optimization and both optimizations together improve
on local pruning only. Local pruning is more significant for smaller
number of processors and block processing is more significant for
larger number of processors. In fact, without these optimizations, we
see that the speedups would be too low. It is only due to these effective
optimizations that we have been able to obtain the speedups previously
demonstrated. The comparison is similar across two datasets. The
optimizations are most effective on $8$ processors; on $16$
processors, the effectiveness of the local pruning algorithm declines
greatly, which is why we did not extend the study to a larger number
of processors. 

\prettyref{fig:bsizecomparison} shows the speedups for
various block sizes on small datasets. We have used block sizes of
$1$, $4$, $8$, $16$, $32$, $64$ on up to $16$ processors.
We have observed that increasing the
block size does improve the speedup, especially on larger number of
processors. Speedup generally improves on
both datasets until $32$ block size, and on most until $64$ block
size (except one data point); 
though we also observe that the gains start diminishing at $64$,
which is why we stopped there. Also, 
larger block sizes turned out to be infeasible for large datasets.

Comparing speedups alone does not give us much insight into how these
speed differences occur. We have thus profiled the algorithms in
detail. We have measured the time elapsed for both communication and
computation phases in the algorithms. We have also calculated how the
number of candidates vary when we use the optimizations, and how many
scores are actually accumulated. We have also put a barrier before
each collective communication operation, so that we can measure how
much processors wait before engaging in actual communication. We give
both average and maximum values for the measured values, to show how
imbalance for these values vary. In \prettyref{tab:radikal-variants},
\prettyref{tab:20-newsgroups-variants},
\prettyref{tab:radikal-blocksize}, and
\prettyref{tab:20-newsgroups-blocksize} we measure the following
parameters for varying number of processors and algorithms: p shows
the number of processors, Algo. shows the algorithm being used,
$C_{avg}$ shows the average time of communication, $C_{max}$ shows the
maximum time of communication, $W_{avg}$ shows the average time of
work, $W_{max}$ shows the maximum time of work, Scores shows the total
number of scores communicated, $Cand_{avg}$ shows the average number
of candidates, $Cand_{max}$ shows the maximum number of candidates
$Barr_{avg}$ shows the average barrier time, $Barr_{max}$ shows the
maximum barrier time.

\prettyref{tab:radikal-variants} and
\prettyref{tab:20-newsgroups-variants} show the profiling results for
the three vertical algorithm variants previously mentioned.  The
profiling data suggests that the local pruning optimization is
effective for reducing communication time, and the number of scores
communicated. On $2$ processors, we see that it reduces more than
100-fold.  Even on $16$ processors, there is a 10-fold improvement on
the number of scores communicated. The work time is also reduced due
to fewer scores being processed. The barrier time also reduces
favorably for local pruning optimization. However, block processing
further reduces barrier time, and consequently, the communication
time. It turns out that block processing optimization is very
effective for the all-pairs similarity problem, as otherwise the
effects of small communication latencies and imbalances must be
aggregating. We see that the work time slightly increases, but this is
offset by the huge savings in communication time. For instance, on $8$
processors, for 20-newsgroups dataset, the maximum communication time
reduces from $31.22$ seconds to $7.02$ seconds, while maximum work
time increases from $24.34$ seconds to $25.89$ seconds, and the
barrier time reduces from $15.87$ to $5.73$. These are quite
significant savings for a parallel algorithm.

\prettyref{tab:radikal-blocksize} and
\prettyref{tab:20-newsgroups-blocksize} show the profiling results
when only the processing block size is varied in the fully optimized
vertical algorithm, where algorithm ``vertical-bsx'' means a block
size of x.  Note that the numbers of scores and candidates do not
change in this table. We see that, generally, enlarging block size improves
reduction of communication time and barrier time. The communication
imbalances also follow a decreasing trend as the block size increases,
which shows that our statistical reasoning works. Especially, the
communication times become much more even as the block size is
increased. The barrier time also follows a similar trend, but it does
not become as finely balanced.  The communication and barrier times
are very small already even for a block size of 64, so more
intelligent document partitioning methods may not be very effective in
improving communication performance.  We did not increase the block
size much further, since every document in the block incurs a large
memory penalty. We did get out of memory errors with a block size of
64 on larger datasets. In general, the block size must be specified
with the dataset size in mind so as to prevent such errors.

\begin{figure*}
  \begin{center}
    \subfigure[radikal]{
      \includegraphics[scale=0.46]{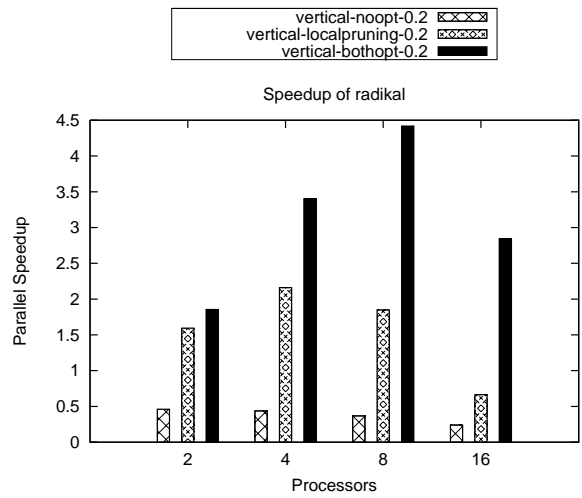}
    } \subfigure[20-newsgroups]{
      \includegraphics[scale=0.46]{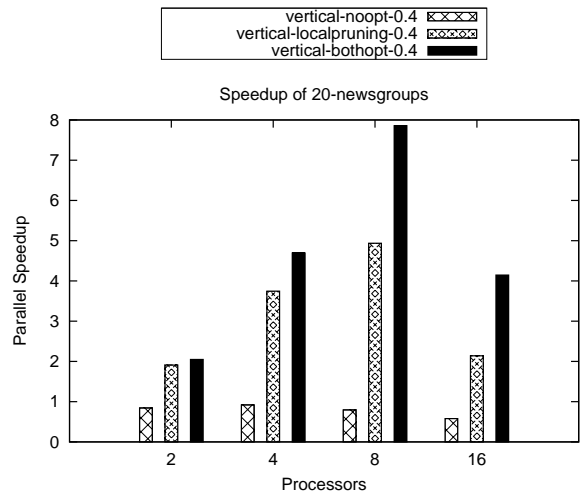}
    }
    \caption{Speedup comparison of three parallel algorithms on
      radikal and 20-newsgroups datasets}
    \label{fig:algocomparison}
  \end{center}
\end{figure*}

\begin{figure*}
  \begin{center}
    \subfigure[radikal]{
      \includegraphics[scale=0.46]{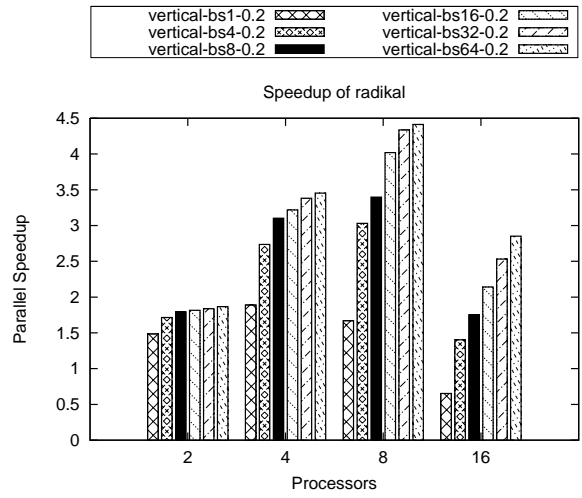}
    } \subfigure[20-newsgroups]{
      \includegraphics[scale=0.46]{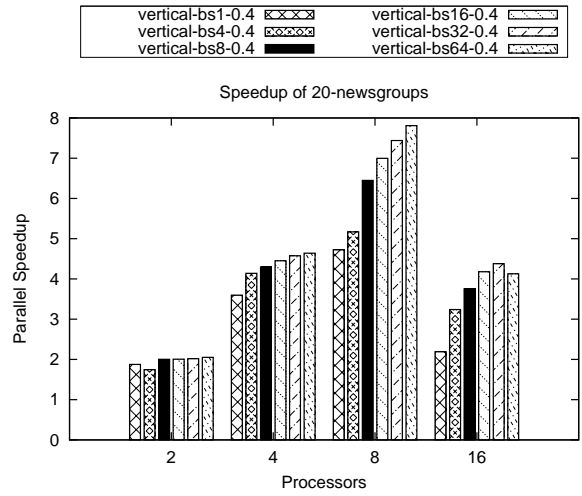}
    }
    \caption{Speedup comparison of varying block sizes on radikal and
      20-newsgroups datasets}
    \label{fig:bsizecomparison}
  \end{center}
\end{figure*}
\begin{table*}
\centering
\caption{Profiling of vertical variants on radikal dataset}
 \begin{tabular}{r l r r r r r r r r r}
p & Algo. & $C_{avg}$ & $C_{max}$ & $W_{avg}$ & $W_{max}$ & $Barr_{avg}$ & $Barr_{max}$ & Scores & $Cand_{avg}$ & $Cand_{max}$ \\
\hline 
\hline \\
2 &vertical-noopt & 12.03 & 12.09 & 8.63 & 8.71 & 3.15 & 4.42 & 23684403 & 0.0 & 0\\
2 &vertical-localpruning & 1.27 & 1.32 & 5.81 & 5.86 & 0.74 & 0.81 & 42086 & 22886.5 & 23272\\
2 &vertical-bothopt & 0.04 & 0.04 & 6.26 & 6.36 & 0.24 & 0.34 & 42086 & 22886.5 & 23272\\
 
\hline \\
4 &vertical-noopt & 18.41 & 18.72 & 4.12 & 4.28 & 6.69 & 9.98 & 23684403 & 0.0 & 0\\
4 &vertical-localpruning & 2.34 & 2.38 & 2.78 & 2.87 & 0.97 & 1.04 & 116000 & 34393.8 & 38986\\
4 &vertical-bothopt & 0.10 & 0.12 & 3.17 & 3.25 & 0.25 & 0.27 & 116000 & 34393.8 & 38986\\
 
\hline \\
8 &vertical-noopt & 27.15 & 27.91 & 2.02 & 2.16 & 11.47 & 17.21 & 23684403 & 0.0 & 0\\
8 &vertical-localpruning & 4.55 & 4.60 & 1.55 & 1.65 & 1.51 & 1.93 & 355937 & 53711.8 & 73642\\
8 &vertical-bothopt & 0.41 & 0.51 & 1.87 & 2.02 & 0.42 & 0.60 & 355937 & 53711.8 & 73642\\
 
\hline \\
16 &vertical-noopt & 47.35 & 48.04 & 1.21 & 1.55 & 10.57 & 13.52 & 23684403 & 0.0 & 0\\
16 &vertical-localpruning & 17.07 & 17.36 & 0.93 & 1.06 & 1.69 & 2.90 & 1155714 & 89717.0 & 202112\\
16 &vertical-bothopt & 2.42 & 2.57 & 1.23 & 1.35 & 0.54 & 0.89 & 1155714 & 89717.0 & 202112\\

\end{tabular}
\label{tab:radikal-variants}
\end{table*}

\begin{table*}
\centering
\caption{Profiling of vertical variants on 20-newsgroups dataset}
 \begin{tabular}{r l r r r r r r r r r}
p & Algo. & $C_{avg}$ & $C_{max}$ & $W_{avg}$ & $W_{max}$ & $Barr_{avg}$ & $Barr_{max}$ & Scores & $Cand_{avg}$ & $Cand_{max}$ \\
\hline 
\hline \\
2 &vertical-noopt & 123.08 & 124.63 & 153.79 & 155.23 & 34.17 & 44.30 & 194138198 & 0.0 & 0\\
2 &vertical-localpruning & 12.84 & 14.91 & 134.68 & 136.86 & 10.84 & 12.84 & 287786 & 148376.0 & 246016\\
2 &vertical-bothopt & 0.17 & 0.18 & 137.88 & 139.72 & 3.29 & 5.13 & 287786 & 148376.0 & 246016\\
 
\hline \\
4 &vertical-noopt & 177.56 & 178.94 & 78.42 & 79.00 & 70.60 & 99.58 & 188179681 & 0.0 & 0\\
4 &vertical-localpruning & 18.67 & 19.49 & 54.76 & 56.04 & 14.04 & 14.55 & 1060564 & 274885.0 & 398405\\
4 &vertical-bothopt & 0.81 & 1.04 & 56.62 & 57.79 & 4.03 & 4.79 & 1060564 & 274885.0 & 398405\\
 
\hline \\
8 &vertical-noopt & 266.82 & 274.48 & 42.87 & 62.64 & 114.28 & 158.89 & 180315935 & 0.0 & 0\\
8 &vertical-localpruning & 30.78 & 31.22 & 23.33 & 24.34 & 13.79 & 15.87 & 4165217 & 551323.0 & 1939290\\
8 &vertical-bothopt & 6.19 & 7.02 & 25.22 & 25.89 & 4.83 & 5.73 & 4165217 & 551323.0 & 1939290\\
 
\hline \\
16 &vertical-noopt & 434.75 & 440.22 & 22.93 & 23.98 & 120.81 & 144.17 & 172874767 & 0.0 & 0\\
16 &vertical-localpruning & 111.18 & 111.83 & 16.82 & 18.08 & 13.58 & 17.53 & 17454734 & 1203360.0 & 11294606\\
16 &vertical-bothopt & 47.05 & 48.16 & 15.53 & 19.16 & 5.40 & 6.87 & 17454734 & 1203360.0 & 11294606\\

\end{tabular}
\label{tab:20-newsgroups-variants}
\end{table*}
\begin{table*}
\centering
\caption{Profiling of various block sizes on radikal dataset}
 \begin{tabular}{r l r r r r r r r r r}
p & Algo. & $C_{avg}$ & $C_{max}$ & $W_{avg}$ & $W_{max}$ & $Barr_{avg}$ & $Barr_{max}$ & Scores & $Cand_{avg}$ & $Cand_{max}$ \\
\hline 
\hline \\
2 &vertical-bs1 &0.73 & 0.74 & 6.48 & 6.61 & 0.90 & 0.99 & 42086 & 22886.5 & 23272\\
2 &vertical-bs4 &0.23 & 0.23 & 6.36 & 6.47 & 0.52 & 0.61 & -- & -- & --\\
2 &vertical-bs8 &0.12 & 0.13 & 6.35 & 6.38 & 0.34 & 0.36 & --& --& --\\
2 &vertical-bs16 &0.08 & 0.08 & 6.33 & 6.45 & 0.34 & 0.45 & -- & -- &--\\
2 &vertical-bs32 &0.05 & 0.05 & 6.32 & 6.34 & 0.29 & 0.31 & --& -- & --\\
2 &vertical-bs64 &0.04 & 0.04 & 6.29 & 6.33 & 0.23 & 0.25 &-- &-- & --\\
 
\hline \\
4 &vertical-bs1 &1.78 & 1.87 & 3.40 & 3.48 & 1.34 & 1.43 & 116000 & 34393.8 & 38986\\
4 &vertical-bs4 &0.50 & 0.52 & 3.29 & 3.36 & 0.67 & 0.76 &-- & -- &--\\
4 &vertical-bs8 &0.28 & 0.30 & 3.22 & 3.29 & 0.42 & 0.47 & -- & -- & --\\
4 &vertical-bs16 &0.20 & 0.23 & 3.20 & 3.29 & 0.36 & 0.41 & -- & -- & --\\
4 &vertical-bs32 &0.14 & 0.16 & 3.17 & 3.26 & 0.27 & 0.31 & -- &-- & --\\
4 &vertical-bs64 &0.10 & 0.11 & 3.16 & 3.26 & 0.24 & 0.28 & -- & -- & --\\
 
\hline \\
8 &vertical-bs1 &3.61 & 3.91 & 2.02 & 2.16 & 1.94 & 2.27 & 355937 & 53711.8 & 73642\\
8 &vertical-bs4 &1.10 & 1.24 & 1.94 & 2.06 & 1.03 & 1.21 & -- & -- & --\\
8 &vertical-bs8 &0.73 & 0.87 & 1.92 & 2.07 & 0.94 & 1.13 & -- & -- & --\\
8 &vertical-bs16 &0.50 & 0.64 & 1.90 & 2.04 & 0.60 & 0.81 & -- & -- & --\\
8 &vertical-bs32 &0.40 & 0.51 & 1.90 & 2.03 & 0.47 & 0.65 & -- & -- & --\\
8 &vertical-bs64 &0.41 & 0.52 & 1.88 & 2.03 & 0.42 & 0.60 & -- & -- & --\\
 
\hline \\
16 &vertical-bs1 &15.39 & 16.12 & 1.37 & 1.53 & 2.73 & 3.71 & 1155714 & 89717.0 & 202112\\
16 &vertical-bs4 &6.39 & 6.66 & 1.31 & 1.45 & 1.17 & 1.69 & -- & -- & --\\
16 &vertical-bs8 &4.77 & 4.95 & 1.29 & 1.42 & 0.97 & 1.40 & -- & -- & --\\
16 &vertical-bs16 &3.57 & 3.74 & 1.28 & 1.43 & 0.85 & 1.25 & -- & --& --\\
16 &vertical-bs32 &2.85 & 3.01 & 1.28 & 1.44 & 0.64 & 1.00 & -- & -- & --\\
16 &vertical-bs64 &2.43 & 2.58 & 1.23 & 1.34 & 0.55 & 0.89 & -- & -- & --\\

\end{tabular}
\label{tab:radikal-blocksize}
\end{table*}
\begin{table*}
\centering
\caption{Profiling of various block sizes on 20-newsgroups dataset}
 \begin{tabular}{r l r r r r r r r r r}
p & Algo. & $C_{avg}$ & $C_{max}$ & $W_{avg}$ & $W_{max}$ & $Barr_{avg}$ & $Barr_{max}$ & Scores & $Cand_{avg}$ & $Cand_{max}$ \\
\hline 
\hline \\
2 &vertical-bs1 &2.66 & 2.69 & 138.96 & 140.97 & 11.56 & 13.88 & 287786 & 148376.0 & 246016\\
2 &vertical-bs4 &0.88 & 0.88 & 148.88 & 158.13 & 15.14 & 24.28 & -- & -- & --\\
2 &vertical-bs8 &0.50 & 0.52 & 137.95 & 140.12 & 5.35 & 7.53 & -- & --& --\\
2 &vertical-bs16 &0.30 & 0.31 & 138.80 & 140.88 & 4.54 & 6.64 & -- & -- & --\\
2 &vertical-bs32 &0.39 & 0.40 & 138.46 & 140.61 & 3.96 & 6.10 & -- & -- & --\\
2 &vertical-bs64 &0.17 & 0.18 & 137.19 & 138.65 & 3.09 & 4.54 & -- & -- & --\\
 
\hline \\
4 &vertical-bs1 &5.87 & 6.25 & 58.95 & 59.57 & 15.29 & 16.31 & 1060564 & 274885.0 & 398405\\
4 &vertical-bs4 &2.00 & 2.14 & 58.06 & 58.83 & 9.44 & 10.42 & -- & -- & --\\
4 &vertical-bs8 &1.51 & 1.76 & 57.70 & 59.07 & 7.69 & 8.60 & -- & -- & --\\
4 &vertical-bs16 &0.92 & 1.01 & 57.32 & 57.89 & 6.27 & 7.22 & -- & -- & --\\
4 &vertical-bs32 &0.78 & 0.89 & 57.47 & 58.92 & 4.55 & 5.55 & -- & -- & --\\
4 &vertical-bs64 &0.81 & 1.04 & 57.00 & 58.15 & 4.12 & 4.94 & -- & -- &--\\
 
\hline \\
8 &vertical-bs1 &17.20 & 18.70 & 27.30 & 27.84 & 16.77 & 18.05 & 4165217 & 551323.0 & 1939290\\
8 &vertical-bs4 &11.28 & 12.47 & 28.66 & 35.00 & 15.51 & 18.36 & -- & -- & --\\
8 &vertical-bs8 &8.61 & 9.55 & 26.53 & 27.17 & 9.16 & 9.90 & -- &-- & --\\
8 &vertical-bs16 &6.93 & 7.94 & 26.06 & 26.72 & 7.72 & 8.90 & -- & -- & --\\
8 &vertical-bs32 &6.62 & 7.55 & 25.71 & 26.30 & 5.87 & 6.83 & -- & --& --\\
8 &vertical-bs64 &6.19 & 7.02 & 25.30 & 25.97 & 4.84 & 5.70 & -- & -- & --\\
 
\hline \\
16 &vertical-bs1 &92.72 & 96.26 & 20.55 & 21.68 & 18.74 & 21.27 & 17454734 & 1203360.0 & 11294606\\
16 &vertical-bs4 &59.00 & 61.03 & 18.62 & 20.42 & 10.41 & 12.41 & -- & -- & --\\
16 &vertical-bs8 &50.25 & 52.13 & 17.44 & 19.17 & 7.88 & 9.74 & -- & -- & --\\
16 &vertical-bs16 &44.52 & 46.23 & 16.49 & 18.98 & 6.60 & 8.32 & -- & -- & --\\
16 &vertical-bs32 &43.04 & 44.83 & 15.89 & 18.67 & 5.35 & 7.05 & -- & -- & --\\
16 &vertical-bs64 &47.19 & 48.26 & 15.52 & 19.02 & 5.30 & 7.00 & -- & -- & --\\
 
\end{tabular}
\label{tab:20-newsgroups-blocksize}
\end{table*}

\section{Conclusions \& Future Work}
\label{sec:conclusion}

We have designed new parallel algorithms for the efficient practical
algorithms proposed by Bayardo et. al \cite{Bayardo2007Scaling}. We
have compared various optimizations to the practical algorithms, and
we have found that a simple optimization to \var{all-pairs-0} which we call
\var{all-pairs-0-array} gave the best results. We have been able to
distribute both vectors and dimensions in a way that is faithful to
the original processing order and data structures of
\var{all-pairs-0-array}. The vertical parallel algorithm distributes
dimensions and parallelizes the inner loop, accumulating
candidates. We have proposed an effective pruning step to decrease the
number of candidates communicated in this step
(\prettyref{lem:localpruning}). Various optimizations and
implementation choices for the vertical algorithm have been
considered, including a recursive similarity match search
algorithm. The horizontal parallel algorithm is easier and it
parallelizes the outer loop of the algorithm. We have also proposed a
2-D parallel algorithm which combines the inner-loop and outer-loop
parallelizations in an elegant fashion. Our experiments show that the
variety of parallelizations is useful for large-scale similarity graph
construction.

In the future, we would like to incorporate more
techniques to prune candidates, and other optimizations into our
framework. For instance, it may be possible to exploit the Zipf-like
distribution of dimension frequencies, in a better way. 
Data decomposition approaches, like that of \cite{kulkarni06}, may be
incorporated. It may also be worthwhile to investigate the
applicability of our data distribution approach to approximate
similarity search and knn algorithms, as well as different algorithmic
approaches to proximity search. The scalability of both the vertical
and the 2-D algorithms could be improved upon. For the vertical
algorithm, a better recursive local pruning algorithm could be useful,
or more intelligent pruning heuristics could be discovered. For the
2-D algorithm, a better implementation could make use of asynchronous
communication and burst-mode transfers. In general, it is an open
problem to find the best data decomposition for parallel solutions of
this problem which does not suffer from the replication bottleneck of
the horizontal distribution. Our present results may lead to better
solutions in that area, eventually.

\section*{Acknowledgments}

Ata T\"urk provided the real world datasets for the parallel all 
pairs algorithm. Ata T\"urk also reviewed the paper in detail, 
which improved the
organization of the paper, and he fixed a condition in the basic
horizontal algorithm, and thus substantially supported our research. He has also
contributed theoretical ideas that did not make it to this paper, but
are likely to be published in consequent work.
The experiments were performed at the TUBITAK ULAKBIM high
performance computing center. The copyright of the OCaml implementation
belongs to G\"ok Us Sibernetik Ar-Ge Ltd., and it is used in big data
applications of the said company.

\bibliography{all-pairs} 
\bibliographystyle{plain}

\end{document}